\renewcommand{\ker}{\Ker}
\newcommand{\mc}[1]{\mathcal{#1}}
\newcommand{\mf}[1]{\mathfrak{#1}}
\newcommand{\mb}[1]{\mathbb{#1}}
\newcommand{\tint}{{\textstyle\int}}
\DeclareMathOperator{\ad}{ad}
\DeclareMathOperator{\Ad}{Ad}
\DeclareMathOperator{\im}{Im}
\DeclareMathOperator{\Ker}{Ker}
\theoremstyle{plain}
\newtheorem{theorem}{Theorem}[section]
\newtheorem{proposition}[theorem]{Proposition}
\theoremstyle{definition}
\newtheorem{definition}[theorem]{Definition}
\newtheorem{example}[theorem]{Example}
\theoremstyle{remark}
\numberwithin{equation}{section}
\definecolor{light}{gray}{.9}
\title{On classical finite and affine $\mc W$-algebras}
\author{Alberto De Sole}
\begin{document}

\pagestyle{plain}

\maketitle

\begin{abstract}
This paper is meant to be a short review and summary of recent results
on the structure of finite and affine classical $\mc W$-algebras,
and the application of the latter to the theory of generalized Drinfeld-Sokolov hierarchies.
\end{abstract}

%%%%%%%%%%%%%%%%%%%%%%%%%%%%%%%%%
\section{Introduction}

In Classical (Hamiltonian) Mechanics the phase space,
describing the possible configurations of a physical system, 
is a Poisson manifold $M$.
The physical observables are the smooth functions on $M$ with real values,
and they thus form a \emph{Poisson algebra} (PA).
The Hamiltonian equations, describing the time evolution of the system,
are written in terms of the Poisson bracket:
$\frac{du}{dt}=\{h,u\}$,
where $h(x)\in C^\infty(M)$ is the Hamiltonian function
(corresponding to the energy observable).

When we quantize a classical mechanic theory we go to Quantum Mechanics.
The observables become non commutative objects,
and the Poisson bracket is replaced by the commutator of these objects.
Hence, the physical observables
in quantum mechanics form an \emph{associative algebra} (AA) $A$.
The phase space is then described as a representation $V$ of $A$,
and the  Schroedinger's equation, describing the evolution of the physical system,
is written in terms of this representation:
$\frac{d\psi}{dt}=H(\psi)$,
where $H\in A$ is the Hamiltonian operator.

Going from a finite to an infinite number of degrees of freedom,
we pass from classical and quantum mechanics to
classical and quantum field theory respectively.
In some sense,
the algebraic structure of the space of observables 
in a conformal field theory is that of a \emph{vertex algebra} (VA) \cite{Bor86},
and its quasi-classical limit is known as \emph{Poisson vertex algebra} (PVA) \cite{DSK06}.

We can summarize the above observations in the following diagram
of the algebraic structures of the four fundamental physical theories:

%%%%%% DIAGRAM %%%%%%%%%%%%%%%%%%%%%%%%%%
\begin{equation}\label{maxi}
\UseTips
\xymatrix{
% 1
PVA\,\,\,\,\,\,  \ar[d]^{\text{Zhu}}
\ar@/^1pc/@{.>}[r]^{\text{quantization}} &
\ar[l]^{\text{cl.limit}}
\,\,\,\,\,\,VA 
\ar[d]_{\text{Zhu}} \\
% 2
PA\,\,\,\,\,\,  \ar@/^1pc/@{.>}[u]^{\text{affiniz.}}
\ar@/_1pc/@{.>}[r]_{\text{quantization}} &
\ar[l]_{\text{cl.limit}}
\,\,\,\,\,\, AA \ar@/_1pc/@{.>}[u]_{\text{affiniz.}}
}
\end{equation}
%%%%%%%%%%%%%%%%%%%%%%%%%%%%%%%%%%%%%%

The arrows in the above diagram have the following meaning.
If we have a filtered associative algebra,
its associated graded is automatically a Poisson algebra called its \emph{classical limit}.
Similarly, if we have a filtered vertex algebra,
its associated graded is a Poisson vertex algebra.
Furthermore, starting from a positive energy vertex algebra 
(respectively Poisson vertex algebra)
we can construct an associative algebra (resp. Poisson algebra)
governing its representation theory, known as its \emph{Zhu algebra}, \cite{Zhu96}.
On the other hand,
the processes of going from a classical theory  to a quantum theory (``quantization''),
or from finitely many to infinitely many degrees of freedom (``affinization''),
do not correspond to canonical functors,
and they are represented in the diagram with dotted arrows.

\medskip

$\mc W$-\emph{algebras} provide a very rich family of examples,
parametrized by a simple Lie algebra $\mf g$ and a nilpotent element $f\in\mf g$,
which appear in all the 4 fundamental aspects in diagram \eqref{maxi}:
%%%%%% DIAGRAM %%%%%%%%%%%%%%%%%%%%%%%%%%
\begin{equation}\label{maxi2}
\UseTips
\xymatrix{
% 1
\mc W^{cl}_z(\mf g,f) 
\,\,\,\,\,\, 
\ar[d]^{\text{Zhu}}
\ar@/^1pc/@{.>}[r]^{\text{quantization}} &
\ar[l]^{\text{cl.limit}}
\,\,\,\,\,\,
\mc W_k(\mf g,f)
\ar[d]_{\text{Zhu}} \\
% 2
\mc W^{cl,fin}(\mf g,f)
\,\,\,\,\,\,  
\ar@/^1pc/@{.>}[u]^{\text{affiniz.}}
\ar@/_1pc/@{.>}[r]_{\text{quantization}} &
\ar[l]_{\text{cl.limit}}
\,\,\,\,\,\, 
\mc W^{fin}(\mf g,f)
\ar@/_1pc/@{.>}[u]_{\text{affiniz.}}
}
\end{equation}
%%%%%%%%%%%%%%%%%%%%%%%%%%%%%%%%%%%%%%

Each of these classes of algebras
was introduced and studied separately, with different applications in mind,
and only later it became fully clear the relations between them.

%%%
\subsubsection*{Classical finite $\mc W$-algebras}
The classical finite $\mc W$-algebra $\mc W^{cl,fin}(\mf g,f)$ 
is a Poisson algebra,
which can be viewed as the algebra of functions 
on the so-called \emph{Slodowy slice} $\mc S(\mf g,f)$.
It was introduced by Slodowy
while studying the singularities 
associated to the coadjoint nilpotent orbits of $\mf g$, \cite{Slo80}.

%%%
\subsubsection*{Finite $\mc W$-algebras}
The first appearance of the finite $\mc W$-algebras $\mc W^{fin}(\mf g,f)$ 
was in a paper of Kostant, \cite{Kos78}.
He constructed the finite $\mc W$-algebra for principal nilpotent $f\in\mf g$
(in which case it is commutative),
and proved that it is isomorphic to the center 
of the universal enveloping algebra $U(\mf g)$.
The construction was then extended in \cite{Lyn79}
for even nilpotent element $f\in\mf g$.
The general definition of finite $\mc W$-algebras $\mc W^{fin}(\mf g,f)$,
for an arbitrary nilpotent element $f\in\mf g$,
appeared much later, \cite{Pre02}.
Starting with the work of Premet, there has been a revival of
interest in finite $\mc W$ -algebras
in connection to geometry and representation theory of simple
finite-dimensional Lie algebras,
and the theory of primitive ideals (see \cite{Mat90,Pre02,Pre05,BrK06}).

%%%
\subsubsection*{Classical $\mc W$-algebras}
The classical (affine) $\mc W$-algebras $\mc W^{cl}_z(\mf g,f)$ 
(depending on the parameter $z\in\mb F$)
were introduced, for principal nilpotent element $f$,
in the seminal paper of Drinfeld and Sokolov \cite{DS85}.
They were introduced as Poisson algebras of function 
on an infinite dimensional Poisson manifold,
and they were used to study KdV-type integrable bi-Hamiltonian hierarchies of PDE's,
nowdays known as Drinfled Sokolov hierarchies.
Subsequently, in the 90's,
there was an extensive literature extending the Drinfeld-Sokolv construction
of classical $\mc W$-algebras
and the corresponding generalized Drinfeld-Sokolv hierarchies
to other nilpotent elements, \cite{dGHM92,FHM92,BdGHM93,DF95,FGMS95,FGMS96}.
Only very recently, in \cite{DSKV13a},
the classical $\mc W$-algebras $\mc W^{cl}_z(\mf g,f)$
were described as Poisson vertex algebras,
and the theory of generalized Drinfeld-Sokolv hierarchies
was formalized in a more rigorous and complete way.

%%%
\subsubsection*{$\mc W$-algebras}
The first (quantum affine) $\mc W$-algebra which appeared in literature
was the so called Zamolodchikov $\mc W_3$-algebra \cite{Zam85},
which is the $\mc W$-algebra associated to $\mf{sl}_3$ and its principal 
nilpotent element $f$.
It was introduced as a ``non-linear'' infinite dimensional Lie algebra
extending the Virasoro Lie algebra,
describing the symmetries of a conformal filed theory.
After the work of Zamolodchikov, a number of papers on
affine $\mc W$-algebras appeared in physics literature, 
mainly as ``extended conformal algebras'',
i.e. vertex algebra extensions of the Virasoro vertex algebra.  
A review
of the subject up to the early 90's may be found in the
collection of a large number of reprints on $\mc W$-algebras \cite{BS95}.
The most important results of this period are in the work by
Feigin and Frenkel \cite{FF90,FF90b}, where the general construction of
$\mc W$-algebras, via a quantization of the Drinfeld-Sokolov reduction,
was introduced in the case of the principal nilpotent element $f$.  
For example, if $\mf g = s\ell_n$, we get the Virasoro vertex algebra for $n=2$, 
and Zamolodchikov's $\mc W_3$ algebra for $n=3$.  
The construction was finally generalized to arbitrary nilpotent element $f$
in \cite{KRW03,KW04,KW05}. In these paper, $\mc W$-algebras were applied to
representation theory of superconformal algebras.

\medskip

A complete understanding of 
the links among the four different appearances 
of $\mc W$-algebras in diagram \eqref{maxi2}
is quite recent.
In \cite{GG02} Gan and Ginzburg 
described the finite $\mc W$-algebras as a quantization 
of the Poisson algebra of functions on the Slodowy slice.
They thus proved that the classical finite $\mc W$-algebra $\mc W^{cl,fin}(\mf g,f)$
can be obtained as the classical limit of the finite $\mc W$-algebra $\mc W^{fin}(\mf g,f)$.

As mentioned earlier, the construction of the $\mc W$-algebra $\mc W_k(\mf g,f)$,
for principal nilpotent element $f$, due to Feigin and Frenkel \cite{FF90},
was obtained as a ``quantization'' of the Drinfeld-Sokolov construction of
the classical $\mc W$-algebra $\mc W^{cl}_z(\mf g,f)$.
But it is only in \cite{DSKV13a}
that the classical $\mc W$-algebra $\mc W^{cl}_z(\mf g,f)$
is described as a Poisson vertex algebra
which can be obtained as classical limit
of the $\mc W$-algebra $\mc W_k(\mf g,f)$.

Furthermore, in \cite{DSK06}
it is proved there that the ($H$-twisted) Zhu algebra $Zhu_H\mc W_k(\mf g,f)$
is isomorphic to the corresponding finite $\mc W$-algebra $\mc W^{fin}(\mf g,f)$.
Hence, their categories of irreducible representations are equivalent.
(This result was independently proved in \cite{Ara07}
for principal nilpotent $f$.)
A similar result for classical $\mc W$-algebras holds as well.
It is also proved in the Appendix of \cite{DSK06} 
(in collaboration with A. D'Andrea, C. De Concini and R. Heluani)
that the quantum Hamiltonian reduction 
definition of finite $\mc W$-algebras
is equivalent to the definition via the Whittaker models,
which goes back to \cite{Kos78}.

\medskip

In the present paper we describe in more detail the ``classical'' part of diagram \eqref{maxi2}:
in Section \ref{sec:2} we describe the Poisson structure of the Slodowy slice
and we introduce the classical finite $\mc W$-algebra $\mc W^{cl,fin}(\mf g,f)$.
In order to describe the affine analogue of it,
we first need to describe the classical finite $\mc W$-algebra
$\mc W^{cl,fin}(\mf g,f)$ as a Hamiltonian reduction,
which is done in Section \ref{sec:2.3}.
By taking the affine analogue of this construction,
we obtain the classical $\mc W$-algebra $\mc W^{cl}_z(\mf g,f)$.
Finally, in Section \ref{sec:3.4}
we describe, following \cite{DSKV13a},
how classical $\mc W$-algebras are used to study
the generalized Drinfeld-Sokolov bi-Hamiltonian hierarchies.

\medskip

The present paper is based on lectures given by the author 
for the conference \emph{Lie superalgberas}, at INdAM, Roma, Italy, in December 2012,
and for the conference 
\emph{Symmetries in Mathematics and Physics}, at IMPA, Rio de Janeiro, Brazil, in June 2013.

%%%%%%%%%%%%%%%%%%%%%%%%%%%%%%%%%
\section{Classical finite $\mc W$-algebras}\label{sec:2}

%%%
\subsection{Poisson manifolds}\label{sec:2.0}

Recall that, by definition,
a \emph{Poisson manifold} is a manifold $M=M^n$
together with a Poisson bracket $\{\cdot\,,\,\cdot\}$
on the algebra of functions $C^\infty(M)$,
making it a Poisson algebra
By the Leibniz rule, we can write the Poisson bracket as
$$
\{f(x),g(x)\}=\sum_{i,j}K_{ij}(x)\frac{\partial f}{\partial x_i}\frac{\partial g}{\partial x_j}\,.
$$
The bivector 
$\eta=\sum_{i,j}K_{ij}(x)\frac{\partial}{\partial x_i}\wedge\frac{\partial}{\partial x_j}
\in\Gamma(\bigwedge^2 TM)$ is the Poisson structure of the manifold.
To every function $h\in C^\infty(M)$ on a Poisson manifold $M$ we associate 
a \emph{Hamiltonian vector field} 
$X_h
=\sum_{i,j=1}^nK(x)_{ij}\frac{\partial h(x)}{\partial x_i}\frac{\partial}{\partial x_j}
=\{h,\cdot\}$,
and the corresponding \emph{Hamiltonian flow} (or evolution):
\begin{equation}\label{20130617:eq1}
\frac{dx}{dt}=\{h,x\}=K(x)\nabla_x h
\,.
\end{equation}
(This is the \emph{Hamiltonian equation} associated to the Hamiltonian function $h$.)
If we start from a point $x\in M$ and we follow all the possible 
Hamiltonian flows \eqref{20130617:eq1} thorough $x$,
we cover the \emph{symplectic leaf} through $x$.
The Poisson manifold $M$ is then
disjoint union of its symplectic leaves: $M=\sqcup_{\alpha}S_\alpha$
(which are symplectic manifolds).

It is natural to ask when a Poisson structure $\eta$ on a Poisson manifold $M$
induces a Poisson structure on a submanifold $N$.
Some sufficient condition is given by the following
\begin{proposition}[Va94]\label{20130519:prop}
Suppose that, for every point $x\in N$, denoting by $(S,\omega)$ the symplectic leave 
of $M$ through $x$, we have
\begin{enumerate}[(i)]
\item
the restriction of the symplectic form $\omega(x)$ of $T_xS$
to $T_xN\cap T_xS$ is non-degenerate;
\item
$N$ is transverse to $S$, i.e. $T_xN+T_xS=T_xM$.
\end{enumerate}
Then, 
the Poisson structure on $M$ naturally induces a Poisson structure on $N$,
and the symplectic leave of $N$ through $x$ is $N\cap S$.
\end{proposition}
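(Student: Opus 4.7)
The plan is to define the induced Poisson bracket on $C^\infty(N)$ by means of \emph{good extensions}: given $f\in C^\infty(N)$, one would extend $f$ to $\tilde f\in C^\infty(M)$ in a neighborhood of $N$ in such a way that the Hamiltonian vector field $X_{\tilde f}$ is tangent to $N$ along $N$, and then set $\{f,g\}_N:=\{\tilde f,\tilde g\}_M|_N$. The key linear-algebraic fact, to be established at each $x\in N$, is the direct sum decomposition
\[
T_xM \;=\; T_xN \;\oplus\; K_x\bigl((T_xN)^0\bigr),
\]
where $K_x:T_x^*M\to T_xM$ is the sharp map of the Poisson tensor (with image $T_xS$ and kernel $(T_xS)^0$) and $(T_xN)^0\subset T_x^*M$ is the annihilator of $T_xN$. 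Setting $L_x=T_xN\cap T_xS$, transversality (ii) gives $(T_xN)^0\cap(T_xS)^0=(T_xN+T_xS)^0=0$, so $K_x|_{(T_xN)^0}$ is injective into $T_xS$ with image of dimension $\dim M-\dim N$. Since $\omega_x(K_x\alpha,v)=\langle\alpha,v\rangle=0$ for $\alpha\in(T_xN)^0$ and $v\in L_x\subset T_xN$, this image lies inside the symplectic orthogonal $L_x^{\perp}$ of $L_x$ in $T_xS$; a dimension count (using (ii) for $\dim L_x$) forces equality. Condition (i) then gives $T_xS=L_x\oplus L_x^{\perp}=L_x\oplus K_x((T_xN)^0)$, which combined with $T_xM=T_xN+T_xS$ produces the asserted splitting.

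With this splitting in hand, any initial extension $\tilde f_0$ of $f$ can be modified to a good extension $\tilde f=\tilde f_0+h$ by choosing $h$ smooth, vanishing on $N$, with $dh|_x$ cancelling the projection of $X_{\tilde f_0}(x)$ onto $K_x((T_xN)^0)$ along $T_xN$, for every $x\in N$; this is readily arranged in tubular-neighborhood coordinates around $N$. Two good extensions of the same $f$ differ by a function whose differential on $N$ lies in $(T_xN)^0$ and whose Hamiltonian vector field is tangent to $N$; the splitting forces this differential to vanish along $N$, so $X_{\tilde f}|_N$ depends only on $f$. Consequently $\{f,g\}_N:=X_{\tilde f}(\tilde g)|_N$ is well defined, skew-symmetric, and Leibniz in each argument. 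The Jacobi identity descends from $M$ because $\{\tilde f,\tilde g\}_M$ is itself a good extension of $\{f,g\}_N$: its Hamiltonian vector field is $[X_{\tilde f},X_{\tilde g}]$, which is tangent to $N$ along $N$ since both summands are.

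Finally, for the symplectic leaves, observe that good extensions satisfy $X_{\tilde f}(x)\in T_xN\cap T_xS=L_x$, so every Hamiltonian flow on $N$ preserves $N\cap S$ and the leaf through $x$ is contained in (the connected component of) $N\cap S$. Conversely, non-degeneracy of $\omega_x|_{L_x}$ from (i) implies that any $v\in L_x$ arises as $X_{\tilde f}(x)$ for some $f\in C^\infty(N)$ --- one prescribes $df|_{L_x}$ as the symplectic dual of $v$ via $\omega_x|_{L_x}$ --- so $N\cap S$ is a single leaf. The main technical obstacle is the smooth global construction of the correction $h$ along $N$; it is not deep, but does require care, and is cleanest in tubular-neighborhood coordinates in which $K_x((T_xN)^0)$ appears as a trivial normal subbundle, reducing the problem to the pointwise linear algebra above.
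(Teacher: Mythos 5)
Your argument is correct, and it is essentially the standard one: the paper itself gives no proof of this proposition (it is quoted from [Va94]), and what you have written is precisely the classical Dirac-bracket/constraint construction that underlies Vaisman's treatment (the same linear algebra also appears in Gan--Ginzburg's quantization-of-Slodowy-slices paper, which the present paper invokes in Section \ref{sec:2.1}). The key splitting $T_xM=T_xN\oplus K_x\bigl((T_xN)^0\bigr)$ is derived correctly: transversality gives injectivity of $K_x$ on $(T_xN)^0$, the pairing computation places the image in $L_x^{\perp}$, the dimension count gives equality, and (i) then splits $T_xS$, hence $T_xM$. Two small points of precision, neither a gap: in the well-definedness step, what the splitting gives directly is $X_{g}|_N=0$ for the difference $g$ of two good extensions (that $dg|_N=0$ then follows from injectivity of $K_x$ on $(T_xN)^0$, but only the vanishing of $X_g|_N$ is needed); and in the last step ``$N\cap S$ is a single leaf'' should be read as ``the connected component of $N\cap S$ through $x$'', the passage from the pointwise statement that Hamiltonian vector fields of $N$ span $L_x=T_x(N\cap S)$ to equality of the leaf with that component being the usual openness-plus-connectedness orbit argument. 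With those readings, your proof establishes exactly the statement the paper cites.
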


If $\mf g$ be a Lie algebra,
the dual space $\mf g^*$ has a natural structure of a Poisson manifold.
Indeed, the Lie bracket $[\cdot\,,\,\cdot]$ on $\mf g$
extends uniquely to a Poisson bracket on the symmetric algebra $S(\mf g)$:
if $\{x_i\}_{i=1}^n$ is a basis of $\mf g$, we have
\begin{equation}\label{20130519:eq1}
\{P,Q\}=\sum_{i,j=1}^n\frac{\partial P}{\partial x_i}\frac{\partial Q}{\partial x_j}[x_i,x_j]
\,.
\end{equation}
We can think at $S(g)$ as the algebra of polynomial functions on $\mf g^*$.
Hence, $\mf g^*$ has an induced structure of a Poisson manifold.
In coordinates, if we think at $\{x_i\}_{i=1}^n$ as linear functions, or local coordinates, on $\mf g^*$,
and we let $\{\xi_i=\frac{\partial}{\partial x_i}\}_{i=1}^n$ be the dual basis of $\mf g^*$,
then, by \eqref{20130519:eq1}, 
the Poisson structure $\eta\in\Gamma(\bigwedge^2 T\mf g^*)$ evaluated at $\xi\in\mf g^*$ is
\begin{equation}\label{20130519:eq2}
\eta(\xi)
=
\sum_{i,j=1}^n\xi([x_i,x_j])\xi_i\wedge\xi_j
=\sum_{j=1}^n\ad^*(x_j)(\xi)\wedge\xi_j
\,\in\wedge^2(\mf g^*)
\,.
\end{equation}
By \eqref{20130519:eq2},
the Hamiltonian vector field associated to $a\in\mf g$ is $\ad^*a$,
and the corresponding Hamiltonian flow through $\xi\in\mf g^*$
is $\Ad^*(e^{ta})(\xi)$.
Hence, 
the symplectic leaves of $\mf g^*$ are the coadjoint orbits $S=\Ad^*G(\xi)$
where $G$ is the connected Lie group with Lie algebra $\mf g$.
The Poisson structure on the coadjoint orbits $\Ad^*G(\xi)$
is known as Kirillov-Kostant Poisson structure.
Its inverse is a symplectic structure. At the point $\xi\in\mf g^*$
it is the following non-degenerate skewsymmetric form 
$\omega(\xi)$ on $\ad^*\mf g(\xi)$:
\begin{equation}\label{20130618:eq1}
\omega(\xi)(\ad^*(a)(\xi),\ad^*(b)(\xi))=\xi([a,b])
\,.
\end{equation}

%%%
\subsection{The Poisson structure on the Slodowy slice}\label{sec:2.1}

Let $\mf g$ be a reductive finite dimensional Lie algebra,
and let $f\in\mf g$ be a nilpotent element.
By the Jacoboson-Morozov Theorem,
$f$ can be included in an $\mf{sl}_2$-triple $\{e,h=2x,f\}$,
see e.g. \cite{CM93}.
Let $(\cdot\,|\,\cdot)$ be a non-degenerated invariant symmetric bilinear form on $\mf g$,
and let $\Phi:\,\mf g\stackrel{\sim}{\to}\mf g^*$ be the isomorphism associated to this 
bilinear form: $\Phi(a)=(a|\,\cdot)$.
We also let $\chi=\Phi(f)=(f|\,\cdot)\in\mf g^*$.

The \emph{Slodowy slice} \cite{Slo80} associated to this $\mf{sl}_2$-triple element is, by definition,
the following affine space
\begin{equation}\label{20130620:eq4}
\mc S=\Phi(f+\mf g^e)
=\big\{\chi+\Phi(a)\,\big|\,a\in\mf g^e\big\}\subset\mf g^*
\,.
\end{equation}

Let $\xi=\Phi(f+r)$, $r\in\mf g^e$, be a given point of the Slodowy slice.
The tangent space to the coadjoint orbit $\Ad^*G(\xi)$ at $\xi$ is
%\begin{equation}\label{20130519:eq11}
$T_\xi(\Ad^*G(\xi))=\ad^*(\mf g)(\xi)=\Phi([f+r,\mf g])$,
%\,,
%\end{equation}
while the tangent space to the Slodowy slice at $\xi$ is
%\begin{equation}\label{20130519:eq12}
$T_\xi(\mc S)
\simeq\Phi(\mf g^e)$.
%\,.
%\end{equation}
%
%
Recalling \eqref{20130618:eq1}, one can check that the assumptions 
of Proposition \ref{20130519:prop} hold, \cite{GG02}:
\begin{enumerate}[(i)]
\item
The restriction of the symplectic form \eqref{20130618:eq1} to $T_\xi(\Ad^*G(\xi))\cap T_\xi(\mc S)$
is non-degenerate.
In other words, if $a\in\mf g$ is such that
$[f+r,a]\in\mf g^e$ and $a\perp[f+r,\mf g]\cap\mf g^e$,
then $a=0$.
\item
The Slodowy slice $\mc S$ intersect transversally the coadjoint orbit at $\xi$,
i.e. $[f+r,\mf g]+\mf g^e=\mf g$.
\end{enumerate}
It then follows by Proposition \ref{20130519:prop}
that $\mc S\subset\mf g^*$ is a Poisson submanifold,
i.e. it has a Poisson structure induced by the Kirillov-Kostant structure on $\mf g^*$.
\begin{definition}\label{20130620:def1}
The \emph{classical finite} $\mc W$ \emph{algebra} $\mc W^{cl,fin}(\mf g,f)\simeq S(\mf g^f)$
is the algebra of polynomial functions on the Slodowy slice $\mc S$
\end{definition}
Clearly, the dual space to $\Phi(\mf g^e)$ is $\mf g^f$.
Hence, 
by the definition \eqref{20130620:eq4} of $\mc S$,
we can identify $\mc W^{cl,fin}(\mf g,f)$,
as a polynomial algebra, with the symmetric algebra over $\mf g^f$.
In fact, we can write down an explicit formula for the Poisson bracket of
the classical finite $\mc W$-algebra.
We have the direct sum decomposition:
$\mf g=[e,\mf g]\oplus\mf g^f$,
and, for $a\in\mf g$, we denote $a^\sharp$ its projection on $\mf g^f$.
Let $\{q_i\}_{i=1}^k$ be a basis of $\mf g^f$
consisting of $\ad x$-eigenvectors,
and let $\{q^i\}_{i=1}^k$ be the dual basis of $\mf g^e$.
For $i\in\{1,\dots,k\}$,
we let $\delta(i)\in\frac12\mb Z$ be $\ad x$-eigenvalue of $q^i$.
By representation theory of $\mf{sl}_2$,
a basis of $\mf g$ is
\begin{equation}\label{20130520:eq6}
\Big\{
q^i_n:=(\ad f)^nq^i\,\Big|\,n=0,\dots,2\delta(i),\,i=1,\dots, k
\Big\}
\,,
\end{equation}
and let $\Big\{q_i^n\,\Big|\,n=0,\dots,2\delta(i),\,i=1,\dots, k\Big\}$,
be the dual basis of $\mf g$.
(Here and further we let $q^i_0=q^i$ and $q_i^0=q^i$.)
\begin{theorem}[\cite{DSK13c}]\label{20130521:prop}
The Poisson bracket on the classical finite $\mc W$-algebra 
$\mc W^{cl,fin}(\mf g,f)$ is ($p,q\in\mf g^f$):
$$
\{p,q\}_{\mc S}=
[p,q]+
\sum_{s=1}^\infty
\sum_{i_1,\dots,i_s=1}^k
\sum_{m_1,\dots,m_s=0}^d
[p,q^{i_1}_{m_1}]^\sharp
[q^{m_1+1}_{i_1},q^{i_2}_{m_2}]^\sharp
\dots
[q^{m_s+1}_{i_s},q]^\sharp
\,.
$$
\end{theorem}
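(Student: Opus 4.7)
The induced bracket on $\mc S$ will be computed via a Dirac-type reduction from the Kirillov--Kostant bracket on $\mf g^*$. Using $\Phi$ to identify $\mf g\simeq\mf g^*$, the Kirillov--Kostant bracket becomes $\{a,b\}=[a,b]$ on linear functions, the slice $\mc S$ becomes the affine subspace $f+\mf g^e$, and $\mc W^{cl,fin}(\mf g,f)=S(\mf g^f)$ is realized as its algebra of polynomial functions. The key device is to lift each $p\in\mf g^f$ to an element $\tilde p\in S(\mf g)$ satisfying (i) $\tilde p|_{\mc S}=p|_{\mc S}$, and (ii) the Hamiltonian vector field $X_{\tilde p}$ on $\mf g^*$ is tangent to $\mc S$ at every point of $\mc S$. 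Property (ii), combined with the transversality verified in Section~\ref{sec:2.1}, forces $\{\tilde p,\tilde q\}_{\mf g^*}\big|_{\mc S}$ to coincide with the reduced bracket $\{p,q\}_{\mc S}$ (this is the concrete content of Proposition~\ref{20130519:prop}).

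To construct such a lift I argue iteratively. At $\xi=\Phi(f+r)\in\mc S$ with $r\in\mf g^e$, the tangency condition reads $[f+r,\nabla\tilde p(\xi)]\in\mf g^e$. Starting with $\tilde p_{(0)}=p$, the obstruction -- namely the component of $[f+r,p]$ in the complementary subspace $[f,\mf g]$ -- is expanded in the basis $\{q^i_m\}_{m\ge 1}$ of $[f,\mf g]$, and an appropriate correction of $\tilde p_{(0)}$ built from the dual basis elements $q_i^{m+1}$ cancels this obstruction modulo higher $\ad x$-weight, introducing factors of the form $[\,\cdot\,,q^{i_1}_{m_1}]^\sharp$ in the gradient. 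Iterating produces corrections of the form $[q^{m_j+1}_{i_j},q^{i_{j+1}}_{m_{j+1}}]^\sharp$ at the $j$-th step; since the $\ad x$-grading on $\mf g$ is bounded, the iteration terminates, yielding a well-defined $\tilde p\in S(\mf g)$.

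Finally, the Leibniz rule together with $\{a,b\}=[a,b]$ expand $\{\tilde p,\tilde q\}|_{\mc S}$ as a sum indexed by the total iteration depth $s$. The $s=0$ contribution gives $[p,q]^\sharp$, while pairing $s$ corrections linking $p$ to $q$ assembles into the $s$-th summand of the series, producing precisely the chain $[p,q^{i_1}_{m_1}]^\sharp[q^{m_1+1}_{i_1},q^{i_2}_{m_2}]^\sharp\cdots[q^{m_s+1}_{i_s},q]^\sharp$ appearing in the statement.

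\medskip

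\noindent\textbf{Main obstacle.} The hardest step is verifying that the iterative correction involves exactly the dual basis elements $q_i^{m+1}$, with the characteristic shift $m\mapsto m+1$. This shift encodes how the invariant form interacts with the $(\ad f)$-strings of the $\mf{sl}_2$-decomposition of $\mf g$: in Dirac-reduction language it is the inversion, order-by-order in $\ad x$-weight, of the leading piece of the constraint matrix $\{q^i_m,q^j_n\}|_{\mc S}$ (with constraints the linear functions on $\mf g^*$ corresponding to a basis of $[f,\mf g]$). Once this combinatorial identification is established, the Leibniz expansion reassembles cleanly into the displayed geometric series.
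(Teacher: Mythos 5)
First, a remark on the comparison you asked for implicitly: the paper contains no proof of Theorem \ref{20130521:prop} at all --- it is quoted from \cite{DSK13c}, which is listed as ``in preparation'' --- so your sketch can only be judged on its own merits, not against an argument in the text. On its merits, the skeleton is the right one and can be completed: identify $\mf g^*\simeq\mf g$ via $\Phi$, characterize the bracket induced by Proposition \ref{20130519:prop} as $\{p,q\}_{\mc S}(\Phi(f+r))=(f+r\,|\,[\nabla\tilde p,\nabla \tilde q])$ where one of the two lifts has Hamiltonian vector field tangent to $\mc S$, i.e.\ $[f+r,\nabla\tilde p(f+r)]\in\mf g^e$ with $\nabla\tilde p\equiv p$ modulo $[e,\mf g]=(\mf g^e)^{\perp}$, and solve this condition by an iteration that terminates because each step strictly raises $\ad x$-weight. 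I checked that carried out correctly this reproduces the stated series at low orders, with the constants $(f|[\cdot\,,\cdot])$ dropping out because $\mf g^f\perp[f,\mf g]$ and with the $s=0$ term equal to $[p,q]=[p,q]^\sharp$ since $\mf g^f$ is a subalgebra.

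The genuine gap is that the step you yourself single out as the hardest --- the exact bookkeeping producing the chain $[p,q^{i_1}_{m_1}]^\sharp[q_{i_1}^{m_1+1},q^{i_2}_{m_2}]^\sharp\cdots[q_{i_s}^{m_s+1},q]^\sharp$ --- is not only left out but mis-described, and it is where the entire content of the theorem lies. Concretely: the obstruction $\pi_{[f,\mf g]}([r,p+\cdots])$ is expanded in the basis $\{q^i_m\}_{m\geq1}$ of $[f,\mf g]$ with coefficients $(q_i^m|\cdot)$, so to cancel the component along $q^i_m$ one must correct the gradient by a preimage of $q^i_m$ under $\ad f$, namely the basis vector $q^i_{m-1}$ (which lies in $[e,\mf g]$ automatically, not being a lowest weight vector of its $\mf{sl}_2$-string); the dual elements enter only through the coefficients, via $(q_i^m|[r,u])=(r|[u,q_i^m])\leftrightarrow[u,q_i^m]^\sharp$. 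Your prescription --- corrections ``built from the dual basis elements $q_i^{m+1}$'' cancelling an obstruction expanded in the $q^i_m$ --- cannot work as stated, because $[f,q_i^{m+1}]=-q_i^{m}$ points along the dual directions and cannot cancel anything proportional to $q^i_m$. Similarly, in your Dirac-matrix formulation the constraints (affine functions constant on $\mc S$) correspond to $[e,\mf g]=(\mf g^e)^{\perp}$, i.e.\ to the $q_i^{n}$ with $n\geq1$, not to $[f,\mf g]$. Finally, applying the iteration to the lift of $p$ yields the chain in the reversed order (the formula for $\{q,p\}$ with an overall sign); to land on the displayed ordering you must either lift $q$ instead, or first establish the skewsymmetry of the resulting expression --- this, together with the existence of an actual polynomial $\tilde p$ with the prescribed gradient along $\mc S$ and $\tilde p|_{\mc S}=p|_{\mc S}$, should be made explicit. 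With these corrections the Leibniz expansion does reassemble term by term into the stated series, but as written your text asserts precisely the identification that needs to be proved.
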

\begin{example}\label{20130519:ex}
If $q\in\mf g^f_0=\mf g^e_0$, 
then $[q^{m+1}_i,q]^\sharp=0$ for all $i,m$.
Hence, $\{p,q\}_{\mc S}=[p,q]\in\mf g^f$.
\end{example}

%%%
\subsection{Classical Hamiltonian reduction}\label{sec:2.2}

In order to define, in Section \ref{sec:3}, the classical $\mc W$-algebra $\mc W^{cl}_z(\mf g,f)$,
i.e. affine analogue of the classical finite $\mc W$-algebra $\mc W^{cl,fin}(\mf g,f)$,
it is convenient to describe the Poisson structure on the Slodowy slice $\mc S$
via a Hamiltonian reduction of $\mf g^*$.
In this section we describe, in a purely algebraic setting,
the general construction of the classical Hamiltonian reduction 
of a Hamiltonian action of a Lie group $N$
on a Poisson manifold $P$.
In the next Section \ref{sec:2.3} we 
then describe the classical finite $\mc W$-algebra $\mc W^{cl,fin}(\mf g,f)$
as a Hamiltonian reduction.

% geometric setting

Recall that 
the classical Hamiltonian reduction is associated to a Poisson manifold $M$,
a Lie group $N$ with a Hamiltonian action on $M$,
and a submanifold $\mc O\subset\mf n^*$
which is invariant by the coadjoint action of $N$.
The corresponding \emph{Hamiltonian reduction} is, by definition,
$\mu^{-1}(\mc O)\big/N$,
where $\mu:\,M\to\mf n^*$
is the moment map associated to the Hamiltonian action of $N$ on $M$.
One shows that,
indeed, $\mu^{-1}(\mc O)\big/N$ 
has a Poisson structure induced by that to $M$, \cite{GS90}.

% algebraic level

On a purely algebraic level, 
going to the algebras of functions,
the classical Hamiltonian reduction can be defined as follows.
%
% SET UP
%
Let $(P,\cdot,\{\cdot\,,\,\cdot\})$ be a unital Poisson algebra.
Let $\mf n$ be a Lie algebra.
Let $\phi:\,\mf n\to P$ be a Lie algebra homomorphism,
and denote by $\phi:\,S(\mf n)\to P$ the corresponding Poisson algebra 
homomorphism.
Let $I\subset S(\mf n)$ be a subset
which is invariant by the adjoint action of $\mf n$,
i.e. such that $\ad(\mf n)(I)\subset I$.
Consider the ideal $P\phi(I)$ of $P$ generated by $\phi(I)$.
Note that, in general, $P\phi(I)$ is NOT a Poisson ideal,
so the quotient space $P/P\phi(I)$ has an induced
structure of a commutative associative algebra,
but NOT of a Poisson algebra.

% DEFINITION

\begin{definition}\label{20130517:def}
The \emph{Hamiltonian reduction}
of the Poisson algebra $P$ associated to the Lie algebra homomorphism $\phi:\,\mf n\to P$
and to the $\ad\mf n$-invariant subset $I\subset S(\mf n)$ is,
as a space,
\begin{equation}\label{20130516:eq2}
\mc W(P,\mf n,I):=
\big(P/P\phi(I)\big)^{\mf n}
=\Big\{ f\in P\,\Big|\,\{\phi(\mf n),f\}\subset P\phi(I)\Big\}\Big/P\phi(I)
\,.
\end{equation}
\end{definition}
\begin{proposition}\label{20130516:prop}
The Hamiltonian reduction $\mc W(P,\mf n,I)$ 
has an induced structure of a Poisson algebra.
\end{proposition}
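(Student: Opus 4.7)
Write $J := P\phi(I)$ and let $\tilde{\mc W} := \{f \in P : \{\phi(\mf n),f\}\subset J\}$, so that $\mc W(P,\mf n,I) = \tilde{\mc W}/J$. The plan is to verify the four facts that together ensure the quotient inherits a Poisson structure from $P$: (1) $\tilde{\mc W}$ is an associative subalgebra of $P$; (2) $J \subset \tilde{\mc W}$; (3) $\tilde{\mc W}$ is closed under $\{\cdot,\cdot\}$; and (4) $\{J,\tilde{\mc W}\}\subset J$. Once these are in place, (3) and the antisymmetric version of (4) let one define $\{\bar f,\bar g\}_{\mc W} := \overline{\{f,g\}}$, and the Poisson algebra axioms descend from those of $P$.

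Claim (1) is immediate from Leibniz: if $f,g\in\tilde{\mc W}$ and $a\in\mf n$, then $\{\phi(a),fg\} = \{\phi(a),f\}g + f\{\phi(a),g\}$ lies in $JP+PJ = J$. For claim (2), the key input is $\ad(\mf n)$-invariance of $I$. Given a generator $j = p\,\phi(x)$ with $p\in P$, $x\in I$, expand
\begin{equation*}
\{\phi(a),p\,\phi(x)\} \;=\; \{\phi(a),p\}\phi(x) + p\,\phi(\ad(a)x),
\end{equation*}
using that $\phi:S(\mf n)\to P$ is a Poisson homomorphism so $\{\phi(a),\phi(x)\}=\phi(\ad(a)x)$. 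The first summand is in $P\phi(I)=J$, and $\ad(a)x\in I$ makes the second summand lie in $J$ as well.

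The main technical step is claim (4), since it requires bracketing a general element of $J$ (not just $\phi(I)$) against $\tilde{\mc W}$. For $j = p\,\phi(x)$ with $x\in I$ and $g\in\tilde{\mc W}$, Leibniz gives $\{p\,\phi(x),g\} = \{p,g\}\phi(x) + p\{\phi(x),g\}$, and the first term is manifestly in $J$. To handle the second, write $x\in S(\mf n)$ as a sum of monomials $a_1\cdots a_n$ with $a_i\in\mf n$; since $\phi$ is an algebra homomorphism, $\phi(a_1\cdots a_n) = \phi(a_1)\cdots\phi(a_n)$, and Leibniz gives
\begin{equation*}
\{\phi(a_1)\cdots\phi(a_n),g\} \;=\; \sum_{i=1}^n \phi(a_1)\cdots\widehat{\phi(a_i)}\cdots\phi(a_n)\,\{\phi(a_i),g\},
\end{equation*}
with each $\{\phi(a_i),g\}\in J$ because $g\in\tilde{\mc W}$. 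Hence $\{\phi(x),g\}\in PJ \subset J$, proving (4). Claim (3) now follows from the Jacobi identity: for $f,g\in\tilde{\mc W}$ and $a\in\mf n$, $\{\phi(a),\{f,g\}\} = \{\{\phi(a),f\},g\} + \{f,\{\phi(a),g\}\}$, and each summand lies in $\{J,\tilde{\mc W}\}\subset J$ by (4). I expect step (4) to be the main obstacle, since it is the only place where one really uses both the $\ad(\mf n)$-invariance of $I$ and the compatibility of $\phi$ with the algebra structure on $S(\mf n)$; everything else is a direct application of Leibniz and Jacobi.
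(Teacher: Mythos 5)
Your proof is correct and follows essentially the same route as the paper: show that $\big\{f\in P\,\big|\,\{\phi(\mf n),f\}\subset P\phi(I)\big\}$ is both an associative and a Lie subalgebra of $P$ and that $P\phi(I)$ is an ideal for both structures, using Leibniz, Jacobi, the $\ad(\mf n)$-invariance of $I$, and the fact that $\phi$ is a Poisson algebra homomorphism. The only difference is organizational: you establish the Lie-ideal property $\{P\phi(I),\tilde{\mc W}\}\subset P\phi(I)$ first and deduce closure of $\tilde{\mc W}$ under the bracket from it via Jacobi, whereas the paper proves closure directly by a chain of inclusions that implicitly contains the same Leibniz expansion of brackets with $\phi(S(\mf n))$.
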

\begin{proof}
First, it follows by the Leibniz rule that 
$\big\{ f\in P\,\big|\,\{\phi(\mf n),f\}\subset P\phi(I)\big\}\subset P$
is a subalgebra with respect to the commutative associative product of $P$,
and, since by assumption the set $I$ is $\ad(\mf n)$-invariant,
$P\phi(I)\subset\big\{ f\in P\,\big|\,\{\phi(\mf n),f\}\subset P\phi(I)\big\}$ is its ideal.
Hence, their quotient $W(P,\mf n,S)$
has an induced commutative associative product.
We use the same argument for the Poisson structure:
we claim that 
\begin{enumerate}[(i)]
\item
$\big\{ f\in P\,\big|\,\{\phi(\mf n),f\}\subset P\phi(I)\big\}\subset P\phi(I)$
is a Lie subalgebra,
\item
and that 
$P\phi(I)\subset\big\{ f\in P\,\big|\,\{\phi(\mf n),f\}\subset P\phi(I)\big\}$ is its Lie algebra ideal.
\end{enumerate}
Suppose that $f,g\in P$ are such that $(\ad\phi(\mf n))(f)\subset P\phi(I)$ 
and $(\ad\phi(\mf n))(g)\subset P\phi(I)$.
Then, by the Jacobi identity,
$$
\begin{array}{l}
\displaystyle{
\vphantom{\Big(}
\{\phi(\mf n),\{f,g\}\}\subset\{\{\phi(\mf n),f\},g\}+\{f,\{\phi(\mf n),g\}\}
\subset\{P\phi(I),g\}+\{f,P\phi(I)\}
} \\
\displaystyle{
\vphantom{\Big(}
\subset P\{\phi(I),g\}+P\{f,\phi(I)\}+P\phi(I)
\subset P\{\phi(S(\mf n)),g\}+P\{\phi(S(\mf n)),f\}+P\phi(I)
} \\
\displaystyle{
\subset P\{\phi(n),g\}+P\{\phi(\mf n),f\}+P\phi(I)
\subset P\phi(I)
\,.}
\end{array}
$$
In the second inclusion we used the assumption on $f$ and $g$,
in the third inclusion we used the Leibniz rules,
in the fourth inclusion we used the fact that, by construction, $I\subset S(\mf n)$,
in the fifth inclusion we used the Leibniz rules,
and in the last inclusion we used again the assumption on $f$ and $g$.
This proves claim (i).

For claim (ii), let $f\in P$ be such that $\{\phi(\mf n),f\}\subset P\phi(I)$.
We have, with the same line of arguments as above,
$$
\begin{array}{l}
\displaystyle{
\vphantom{\Big(}
\{P\phi(I),f\}\subset P\{\phi(I),f\}+P\phi(I)
\subset P\{\phi(S(\mf n)),f\}+P\phi(I)
} \\
\displaystyle{
\vphantom{\Big(}
\subset P\{\phi(\mf n),f\}+P\phi(I)
\subset P\phi(I)
\,.}
\end{array}
$$
\end{proof}

%%%
\subsection{The Slodowy slice via Hamiltonian reduction}\label{sec:2.3}

We want to describe the classical finite $\mc W$-algebra $\mc W^{cl,fin}(\mf g,f)$
introduced in Section \ref{sec:2.1}
as a Hamiltonian reduction of the Poisson algebra $S(\mf g)$.

% algebraic picture

We have the $\ad x$-eignespace decomposition
$\mf g=\bigoplus_{i\in\frac12\mb Z}\mf g_i$.
Let $\omega$ be the following non-degenerate skewsymmetric bilinear form
on $\mf g_{\frac12}$: 
\begin{equation}\label{20130620:eq1}
\omega(u,v)=(f|[u,v])
\,.
\end{equation}
Let $\ell\subset\mf g_{\frac12}$ be a maximal isotropic subspace.
Consider the nilpotent subalgebra
\begin{equation}\label{20130620:eq2}
\mf n=\ell\oplus\mf g_{\geq1}\subset\mf g
\,.
\end{equation}
Since $\ell\subset\mf g_{\frac12}$ is isotropic w.r.t. the bilinear form \eqref{20130620:eq1},
we have $(f|[\mf n,\mf n])=0$.
Hence, the subset
\begin{equation}\label{20130620:eq3}
I=\big\{n-(f|n)\,\big|\,n\in\mf n\big\}\,\subset S(\mf n)
\,,
\end{equation}
is invariant by the adjoint action of $\mf n$.
Hence, we can consider the corresponding Hamiltonian reduction \eqref{20130516:eq2}
applied to the data $(S(\mf g),\mf n,I)$.
\begin{theorem}[\cite{DSK13c}]\label{20130620:thm}
The classical finite $\mc W$-algebra $\mc W^{cl,fin}(\mf g,f)$
is isomorphic to the Hamiltonian reduction
of the Poisson algebra $S(\mf g)$,
associated to the Lie algebra $\mf n\subset S(\mf g)$ given by \eqref{20130620:eq2},
and the $\ad\mf n$-invariant subset $I\subset S(\mf n)$ in \eqref{20130620:eq3}:
$$
\begin{array}{l}
\displaystyle{
\vphantom{\Big(}
\mc W^{cl,fin}(\mf g,f)
\simeq 
\mc W(S(\mf g),\mf n,I)
} \\
\displaystyle{
=\Big\{ p\in S(\mf g)\,\Big|\,\{\mf n,p\}\subset \langle n-(f|n)\rangle_{n\in\mf n}\Big\}
\Big/
S(\mf g)\langle n-(f|n)\rangle_{n\in\mf n}
}
\end{array}
$$
\end{theorem}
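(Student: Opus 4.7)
The plan is to interpret the Hamiltonian reduction geometrically. Identifying $S(\mf g) \simeq \mb F[\mf g^*]$ and writing $\mf n^\perp = \{\xi \in \mf g^* \mid \xi|_{\mf n} = 0\}$ for the annihilator, the ideal $S(\mf g)\phi(I)$ is exactly the vanishing ideal of the affine subspace $\chi + \mf n^\perp \subset \mf g^*$. So the theorem reduces to showing that the algebra of $\mf n$-invariants on this subspace is isomorphic to $\mb F[\mc S] \simeq S(\mf g^f)$. First I would check that the coadjoint action of the unipotent group $N = \exp(\mf n)$ preserves $\chi + \mf n^\perp$: this amounts to $\chi([\mf n, \mf n]) = (f|[\mf n,\mf n]) = 0$, which is precisely the isotropy of $\ell$ with respect to the form \eqref{20130620:eq1}, built into the construction of $\mf n$ in \eqref{20130620:eq2}. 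Since $N$ is connected, $\mf n$-invariants and $N$-invariants in $\mb F[\chi + \mf n^\perp]$ coincide.

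Next I would establish the geometric heart of the statement: the orbit map
\[
\alpha : N \times \mc S \longrightarrow \chi + \mf n^\perp,
\qquad
(n, s) \longmapsto \Ad^*(n)(s),
\]
is an isomorphism of affine varieties. Well-definedness uses $\mc S \subset \chi + \mf n^\perp$, which holds because $(\mf g^e | \mf n) = 0$ by invariance of the form combined with the $\ad x$-grading. For bijectivity one combines the dimension count $2\dim \mf n + \dim \mf g^e = \dim \mf g$ (from $\mf{sl}_2$-representation theory applied to $\mf g = \bigoplus_j \mf g_j$) with the infinitesimal transversality already recorded in Section \ref{sec:2.1}: at any $\xi \in \mc S$, condition (i) of Proposition \ref{20130519:prop} says $T_\xi \mc S \cap T_\xi(\Ad^*N \cdot \xi) = 0$, and condition (ii) says their sum is $T_\xi(\chi + \mf n^\perp)$. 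Unipotence of $N$, whose exponential is a polynomial isomorphism $\mf n \to N$, then upgrades this pointwise \'{e}tale property to a global isomorphism. Pulling back along $\alpha$ yields $\mb F[\chi + \mf n^\perp]^N \simeq \mb F[\mc S] \simeq S(\mf g^f) = \mc W^{cl,fin}(\mf g,f)$, establishing the claimed isomorphism of commutative algebras.

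Finally, to upgrade this to an isomorphism of Poisson algebras, I would trace through the definitions. Given $p, q \in \mb F[\mc S]$, choose representatives $P, Q \in S(\mf g)$ satisfying $\{\phi(\mf n), P\}, \{\phi(\mf n), Q\} \subset S(\mf g)\phi(I)$; the reduction bracket $\{p, q\}$ is the class of $\{P, Q\}$ modulo $S(\mf g)\phi(I)$, evaluated on $\mc S$. On the other hand, the Poisson bracket on $\mc S$ provided by Proposition \ref{20130519:prop} is dual to the restriction of the Kirillov--Kostant symplectic form \eqref{20130618:eq1} to $T_\xi \mc S \cap T_\xi(\Ad^*G \cdot \xi)$. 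The conditions on $P, Q$ precisely express that the Hamiltonian vector fields $X_P, X_Q$ are tangent to $\chi + \mf n^\perp$ at points of $\mc S$ and hence, modulo $T_\xi(\Ad^*N \cdot \xi)$, lie in $T_\xi \mc S$; from this the matching of brackets follows by a short infinitesimal calculation. The main obstacle I foresee is upgrading the infinitesimal slice property of Proposition \ref{20130519:prop} to the global statement that $\alpha$ is an isomorphism of varieties; once this is in hand, both the commutative and the Poisson identifications follow essentially formally.
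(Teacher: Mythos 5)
Your overall route is the one the paper itself points to but does not carry out: Theorem \ref{20130620:thm} is quoted from \cite{DSK13c} (in preparation) with no proof given here, and the text only remarks that it is the algebraic counterpart of Gan--Ginzburg's Theorem \ref{20130620:thm2}. Your plan --- identify $S(\mf g)\langle n-(f|n)\rangle_{n\in\mf n}$ with the vanishing ideal of $\Phi(f)+\mf n^\perp$, prove that the action map $\alpha:\,N\times\mc S\to\Phi(f)+\mf n^\perp$ is an isomorphism of varieties, and then transport both the commutative and the Poisson structures --- is exactly the Gan--Ginzburg strategy \cite{GG02}, so the question is whether your sketch actually establishes the key geometric input rather than implicitly re-quoting it.

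There are two concrete gaps. First, the infinitesimal statement you invoke is not the one recorded in Section \ref{sec:2.1}: conditions (i)--(ii) of Proposition \ref{20130519:prop} there concern the coadjoint $G$-orbit and the transversality $[f+r,\mf g]+\mf g^e=\mf g$, whereas for $\alpha$ you need, at every $\xi=\Phi(f+r)\in\mc S$, the decomposition $[f+r,\mf n]\oplus\mf g^e=\{a\in\mf g\,:\,(a|\mf n)=0\}$ (equivalently $\mf n\cap\mf g^{f+r}=0$ together with the dimension count $2\dim\mf n+\dim\mf g^e=\dim\mf g$); this is an easy but separate $\mf{sl}_2$-theoretic argument using the nondegeneracy of $\omega$ on $\mf g_{\frac12}$, not a citation of the earlier conditions. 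Second, and more seriously, the step ``unipotence of $N$ upgrades the pointwise \'etale property to a global isomorphism'' is not valid as stated: \'etaleness gives neither injectivity nor surjectivity of $\alpha$, and unipotence alone does not supply them. This is precisely where \cite{GG02} bring in the contracting one-parameter group built from the $\ad x$-grading (the Kazhdan grading), which fixes $\chi$ and contracts both $\Phi(f)+\mf n^\perp$ and $N\times\mc S$ compatibly, thereby forcing the \'etale, equivariant map $\alpha$ to be bijective; alternatively one can prove bijectivity by a gauge-fixing induction on the grading. Without this ingredient the central identification $\mb F[\Phi(f)+\mf n^\perp]^N\simeq S(\mf g^f)$ is asserted rather than proved, and the Poisson-bracket comparison in your last paragraph, which is fine in outline, rests on that unproved isomorphism.
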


% geometric picture

In geometric terms, the restriction map 
$\mu:\,\mf g^*\to\mf n^*$
is a map of Poisson manifolds (the moment map),
and the corresponding dual map
$\mu^*:\,\mf n\to\mf g$,
is the inclusion map.
The element $\chi=(f|\,\cdot)|_{\mf n}\in\mf n^*$
is a character of $\mf n$, in the sense that $\chi([\mf n,\mf n])=0$
(by the assumption that $\ell$ is maximal isotropic).
Hence, $\chi$ is fixed by the coadjoint action of $N$,
the Lie group of $\mf n$.
We can then consider the Hamiltonian reduction 
of the Poisson manifold $\mf g^*$,
by the Hamiltonian action of $N$ on $\mf g^*$
given by the moment map $\mu$,
associated to the $N$-fixed point $\chi\in\mf g^*$:
$$
\text{Ham.Red.}(\mf g^*,N,\chi)
=\mu^{-1}(\chi)/N
=\Phi(f+\mf n^\perp)/N
\,.
$$
Theorem \ref{20130620:thm}
can be then viewed as the algebraic analogue of the following result of Gan and Ginzburg:
\begin{theorem}[\cite{GG02}]\label{20130620:thm2}
The coadjoint action $N\times\mc S\to\Phi(f+\mf n^\perp)$
is an isomorphism of affine varieties.
The corresponding bijection
$$
\mc S\simeq\mu^{-1}(\chi)/N
=\emph{Ham.Red.}(\mf g^*,N,\chi)
\,,
$$
is an isomorphism of Poisson manifolds.
\end{theorem}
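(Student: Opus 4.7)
The plan breaks into two halves: showing that the coadjoint action map $\alpha:N\times\mc S\to\Phi(f+\mf n^\perp)$ is an isomorphism of affine varieties, and then identifying the two induced Poisson structures. First, $\mu^{-1}(\chi)=\Phi(f+\mf n^\perp)$, since the moment map $\mu:\mf g^*\to\mf n^*$ is dual to the inclusion $\mf n\hookrightarrow\mf g$, i.e.\ restriction of functionals, so $\mu(\xi)=(f|\cdot)|_{\mf n}$ iff $\xi-\Phi(f)$ annihilates $\mf n$.

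For the first half, the key infinitesimal input is the direct sum decomposition
\[
\mf n^\perp=[f+r,\mf n]\oplus\mf g^e
\]
for every $r\in\mf g^e$. This follows from the transversality verified in Section~\ref{sec:2.1}, the maximal isotropy of $\ell\subset\mf g_{\frac12}$ for the form $\omega$, and a dimension count ($\dim\mf n+\dim\mf g^e=\dim\mf n^\perp$, checked from the $\mf{sl}_2$-module decomposition of $\mf g$). Such a splitting makes the differential of $\alpha$ at $(1,\xi)$ an isomorphism onto $T_\xi\Phi(f+\mf n^\perp)=\Phi(\mf n^\perp)$; by $N$-equivariance it is an isomorphism at every $(n,\xi)$, so $\alpha$ is \'etale. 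To promote \'etaleness to a global isomorphism I would introduce a contracting $\mb G_m$-action of the form $t\mapsto t^{-2}\Ad^*(\exp(\ln t\cdot\ad x))$: rescaled this way, $\chi$ is the unique fixed point on $\Phi(f+\mf n^\perp)$, the entire affine space contracts to $\chi$ as $t\to 0$, and the action preserves $\mc S$ and, after a compatible twist, $N$. Since $\alpha$ is $\mb G_m$-equivariant, \'etale, and bijective at the attracting fixed point $(1,\chi)\mapsto\chi$, the contraction forces bijectivity, and an \'etale bijection of affine varieties is an isomorphism.

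For the second half, one translates functions. Under the bijection $\mc S\simeq\mu^{-1}(\chi)/N$ together with Theorem~\ref{20130620:thm}, regular functions on $\mu^{-1}(\chi)/N$ correspond to the Hamiltonian reduction $\mc W(S(\mf g),\mf n,I)$, while regular functions on $\mc S$ form $S(\mf g^f)$. Both carry Poisson brackets inherited from $S(\mf g)$: the Hamiltonian reduction bracket of Definition~\ref{20130517:def} on the left, and the Vaisman--Dirac bracket of Proposition~\ref{20130519:prop} on the right. To match them, I would lift each $p\in S(\mf g^f)$ to an $\mf n$-invariant representative $\widetilde p\in S(\mf g)$ modulo the ideal $S(\mf g)\langle n-(f|n)\rangle_{n\in\mf n}$; existence and uniqueness of such a lift are guaranteed by the direct sum decomposition above applied fiberwise over $\mc S$. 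The bracket $\{\widetilde p,\widetilde q\}$ in $S(\mf g)$ then projects on the one hand to the Hamiltonian reduction bracket by Definition~\ref{20130517:def}, and on the other to the Dirac bracket along the coadjoint orbit transverse to $\mc S$ by the construction in Proposition~\ref{20130519:prop}, so the two structures coincide.

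The main obstacle is the global step in the first half: infinitesimal bijectivity only yields an \'etale map, and promoting this to a genuine isomorphism of varieties requires the contracting $\mb G_m$-action with compatible weights on $N$, $\mc S$, and $\Phi(f+\mf n^\perp)$. The careful use of the $\ad x$-grading combined with overall dilation is the technical heart of the argument; Poisson compatibility in the second half is then essentially formal once the identification of function spaces via $\mf n$-invariants is in place.
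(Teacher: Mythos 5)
Your outline is essentially the Gan--Ginzburg argument that the paper quotes without reproducing (the survey gives no proof of Theorem \ref{20130620:thm2}, citing \cite{GG02}): the transversality decomposition $\mf n^\perp=[f+r,\mf n]\oplus\mf g^e$ makes the action map \'etale, a contracting one-parameter (Kazhdan-type) action upgrades \'etaleness to an isomorphism of affine varieties, and the Poisson structures are matched by comparing the reduced bracket of Theorem \ref{20130620:thm} with the Dirac-type bracket of Proposition \ref{20130519:prop} through $\mf n$-invariant lifts, so the approach is the intended one and is sound. The only point to fix is your explicit contraction: $t^{-2}\Ad^*(\exp(\ln t\cdot\ad x))$ rescales $\chi$ by a nonzero power of $t$ instead of fixing it; take instead $t\mapsto t^{2}\Ad^*\big(\exp(\ln t\, h)\big)$ with $h=2x$, which acts with weight $2+2j$ on $\Phi(\mf g_j)$, hence fixes $\chi$ (weight $0$ on $\Phi(\mf g_{-1})$), contracts both $\mc S$ and $\Phi(f+\mf n^\perp)$ as $t\to 0$ (all weights there are $\geq 1$), and is compatible with conjugation on $N$, exactly as your argument requires.
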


%%%%%%%%%%%%%%%%%%%%%%%%%%%%%%%%%
\section{Classical $W$-algebras}\label{sec:3}

%%%
\subsection{Poisson vertex algebras}\label{sec:3.1}

In this section we introduce the notions
of Lie conformal algebra and of Poisson vertex algebra.
They are, in some sense, 
the ``affine analogue'' of a Lie algebra and of a Poisson algebra respectively.
\begin{definition}\label{20130620:def2}
A \emph{Lie conformal algebra} is a $\mb F[\partial]$-module $R$
with a bilinear $\lambda$-bracket 
$[\cdot\,_\lambda\,\cdot]:\,R\times R\to\mb F[\lambda]\otimes R$
satisfying the following axioms:
\begin{enumerate}[(i)]
\item
sesquilinearity:
$[\partial a_\lambda b]=-\lambda[a_\lambda b]$,
$[a_\lambda \partial b]=(\partial+\lambda)[a_\lambda b]$;
\item
skewsymmetry:
$[a_\lambda b]=-[b_{-\lambda-\partial}a]$
(where $\partial$ is moved to the left);
\item
Jacobi identity:
$[a_\lambda[b_\mu c]]-[b_\mu[a_\lambda c]]=[[a_\lambda b]_{\lambda+\mu}c]$.
\end{enumerate}
\end{definition}
\begin{example}\label{20130620:ex1}
Let $\mf g$ be a Lie algebra with a symmetric invariant bilinear form $(\cdot\,|\,\cdot)$.
The corresponding \emph{current Lie conformal algebra} is,
by definition,
$R=\mb F[\partial]\mf g\oplus\mb F$,
with $\lambda$-bracket ($s$ is a fixed element of $\mf g$):
\begin{equation}\label{cur}
[a_\lambda b]=[a,b]+(a|b)\lambda+z(s|[a,b])\,,
\end{equation}
for $a,b\in\mf g$,
extended to $R$ by saying that $\mb F$ is central, 
and by sesquilinearity.
(The term $(a|b)\lambda$ is a 2-cocycle, defining a central extension,
and $z(s|[a,b])$ is a trivial 2-cocycle.)
\end{example}
\begin{definition}\label{20130620:def3}
A \emph{Poisson vertex algebra} is a commutative associative differential algebra $\mc V$
(with derivation $\partial$)
endowed with a Lie conformal algebra $\lambda$-bracket $\{\cdot\,_\lambda\,\cdot\}$,
satisfying the Leibniz rule
\begin{equation}\label{20130620:eq5}
\{a_\lambda bc\}=\{a_\lambda b\}c+\{a_\lambda c\}b\,.
\end{equation}
Note that by the skewsymmetry axiom and the left Leibniz rule \eqref{20130620:eq5}
we get the right Leibniz rule:
\begin{equation}\label{20130620:eq6}
\{ab_\lambda c\}=\{a_{\lambda+\partial} c\}_\to b+\{b_{\lambda+\partial} c\}_\to a\,,
\end{equation}
where the arrow means that $\partial$ should be moved to the right.
\end{definition}
\begin{example}\label{20130620:ex2}
If $R$ is a Lie conformal algebra,
then $S(R)$ has a natural structure of a Poisson vertex algebra,
with $\lambda$-bracket extending the one in $R$ by the left and right Leibniz rules.
\end{example}
\begin{example}\label{20130620:ex3}
As a special case of Example \ref{20130620:ex2},
consider the current Lie conformal algebra $R=\mb F[\partial]\mf g\oplus\mb F$
associated to the Lie algebra $\mf g$ 
and the symmetric invariant bilinear form $(\cdot\,,\,\cdot)$,
defined in Example \ref{20130620:ex1}.
The \emph{affine Poisson vertex algebra} is,
by definition,
$\mc V_z(\mf g)=S(\mb F[\partial]\mf g)$,
with $\lambda$-bracket \eqref{cur}, extended by sesquilinearity and Leibniz rules.
If $\{u_i\}_{i=1}^n$ is a basis of $\mf g$,
the general formula for the $\lambda$-bracket is:
$$
\begin{array}{l}
\displaystyle{
\vphantom{\Big(}
\{P_\lambda Q\}_z
=
\sum_{i,j=1}^n\sum_{m,n\in\mb Z_+}
\frac{\partial Q}{\partial u_j^{(n)}}(\lambda+\partial)^n
} \\
\displaystyle{
\vphantom{\Big(}
\Big([u_i,u_j]+(u_i|u_j)(\lambda+\partial)+z(s|[u_i,u_j])\Big)
(-\lambda-\partial)^m
\frac{\partial Q}{\partial u_j^{(n)}}(\lambda+\partial)^n
\,.}
\end{array}
$$
This is the ``affine analogue'' of the usual Poisson algebra structure on $S(\mf g)$.
It is a 1-parameter family of Poissson vertex algebras, depending on the parameter $z\in\mb F$.
(Having a 1-parameter family is important for applications to the theory of integrable systems.)
\end{example}

%%%
\subsection{Hamiltonian reduction of Poisson vertex algebras}\label{sec:3.2a}

The classical Hamiltonian reduction construction described in Section \ref{sec:2.2}
has an ``affine analogue'' for Poisson vertex algebras.

Let $(\mc V,\partial,\cdot,\{\cdot\,_\lambda\,\cdot\})$ be a Poisson vertex algebra.
Let $R$ be a Lie conformal algebra.
Let $\phi:\,R\to\mc V$ be a Lie conformal algebra homomorphism,
which we extend to a Poisson vertex algebra 
homomorphism $\phi:\,S(R)\to\mc V$.
Let $I\subset S(R)$ be a subset
which is invariant by the adjoint action of $R$,
i.e. such that $[R_\lambda I]\subset \mb F[\lambda]\otimes I$.
Consider the differential algebra ideal $\langle\phi(I)\rangle_\mc V\subset\mc V$ of $\mc V$
generated by $\phi(I)$.
The quotient space $\mc V/\langle\phi(I)\rangle_\mc V$ has an induced
structure of a (commutative associative) differential  algebra
(but NOT of a Poisson vertex algebra).

% DEFINITION

\begin{definition}\label{20130517:defb}
The \emph{Hamiltonian reduction}
of the Poisson vertex algebra $\mc V$ associated to 
the Lie conformal algebra homomorphism $\phi:\,R\to\mc V$
and to the $R$-invariant subset $I\subset S(R)$ is,
as a space,
\begin{equation}\label{20130516:eq2b}
\begin{array}{l}
\displaystyle{
\mc W(\mc V,R,I):=
\big(\mc V\big/
\langle\phi(I)\rangle_\mc V
\big)^{R}
} \\
\displaystyle{
=\Big\{ f\in\mc V\,\Big|\,\{\phi(a)_\lambda f\}
\in \mb F[\lambda]\otimes\langle\phi(I)\rangle_\mc V \, \forall a\in R
\Big\}\Big/
\langle\phi(I)\rangle_\mc V
\,.}
\end{array}
\end{equation}
\end{definition}
\begin{proposition}\label{20130516:propb}
The Hamiltonian reduction $\mc W(\mc V,R,I)$ 
has an induced structure of a Poisson vertex algebra.
\end{proposition}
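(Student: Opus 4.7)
The plan is to mirror the finite-dimensional argument of Proposition \ref{20130516:prop}, upgraded to account for the $\lambda$-bracket and for the differential structure. Denote
$$
N := \Big\{ f \in \mc V \,\Big|\, \{\phi(a)_\lambda f\} \in \mb F[\lambda] \otimes \langle\phi(I)\rangle_\mc V \text{ for all } a \in R \Big\},
$$
so that $\mc W(\mc V,R,I) = N / \langle\phi(I)\rangle_\mc V$. I would establish two claims, directly parallel to (i) and (ii) in the proof of Proposition \ref{20130516:prop}: (a) $N$ is a $\partial$-stable subalgebra of $\mc V$ containing $\langle\phi(I)\rangle_\mc V$ as a differential ideal; (b) $N$ is closed under the $\lambda$-bracket, and $\langle\phi(I)\rangle_\mc V$ is a Lie conformal algebra ideal inside $N$. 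Together (a) and (b) force both the commutative associative differential structure and the $\lambda$-bracket of $\mc V$ to descend to the quotient, and all PVA axioms are then inherited.

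For (a), if $f,g \in N$ then the left Leibniz rule \eqref{20130620:eq5} gives $\{\phi(a)_\lambda fg\} = \{\phi(a)_\lambda f\}\,g + \{\phi(a)_\lambda g\}\,f \in \mb F[\lambda] \otimes \langle\phi(I)\rangle_\mc V$ since the latter is an associative ideal of $\mc V$, and sesquilinearity gives $\{\phi(a)_\lambda \partial f\} = (\partial + \lambda)\{\phi(a)_\lambda f\}$, still in the ideal because it is $\partial$-stable. The inclusion $\langle\phi(I)\rangle_\mc V \subset N$ is checked first on generators $\phi(X)$ for $X \in I$: the $R$-invariance hypothesis $[R_\lambda I] \subset \mb F[\lambda] \otimes I$ together with the fact that $\phi$ is a PVA homomorphism gives $\{\phi(a)_\lambda \phi(X)\} = \phi([a_\lambda X]) \in \mb F[\lambda] \otimes \phi(I)$, and extension to the whole differential ideal generated by $\phi(I)$ is again by left Leibniz and sesquilinearity.

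For (b), apply the Jacobi identity to $\{\phi(a)_\lambda \{f_\mu g\}\}$ with $f,g \in N$, expanding into $\{\{\phi(a)_\lambda f\}_{\lambda+\mu} g\} + \{f_\mu \{\phi(a)_\lambda g\}\}$; this reduces closure of $N$ under the $\lambda$-bracket to showing that $\langle\phi(I)\rangle_\mc V$ is a two-sided Lie conformal ideal of $N$, i.e.\ $\{N\,{}_\mu \langle\phi(I)\rangle_\mc V\} \subset \mb F[\mu] \otimes \langle\phi(I)\rangle_\mc V$ (the opposite inclusion follows by skewsymmetry). On a typical generator $h\phi(X)$ with $X \in I \subset S(R)$, the left Leibniz rule yields $\{f_\mu h\phi(X)\} = \{f_\mu h\}\phi(X) + \{f_\mu \phi(X)\}h$; the first summand is clearly in $\mb F[\mu] \otimes \langle\phi(I)\rangle_\mc V$, while for the second one expands $\phi(X)$ as a polynomial in elements $\phi(r)$, $r \in R$, applies the left Leibniz rule iteratively, and reduces each atomic piece via skewsymmetry to $-\{\phi(r)_{-\mu-\partial} f\}$, which lies in $\mb F[\mu] \otimes \langle\phi(I)\rangle_\mc V$ by the defining property of $N$ and the $\partial$-stability of the ideal.

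The main obstacle is the bookkeeping in the last step: $\langle\phi(I)\rangle_\mc V$ is generated as a \emph{differential} ideal, so its elements include arbitrary iterated $\partial$-derivatives of products $h\phi(X)$, and these have to be chased through both the sesquilinearity axiom and both Leibniz rules without losing track of which factor carries the piece belonging to $\phi(I)$. The key conceptual point, identical to the one invoked in Proposition \ref{20130516:prop}, is that the hypothesis $I \subset S(R)$ allows every $\lambda$-bracket against $\phi(I)$ to be reduced, by iterated Leibniz, to $\lambda$-brackets against the generating subspace $\phi(R)$, where the defining condition of $N$ applies directly. Once this reduction is in place, the proof is a direct affine analogue of the one given for Proposition \ref{20130516:prop}.
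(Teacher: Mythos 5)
Your proposal is correct and is exactly the approach the paper takes: the paper's proof of Proposition \ref{20130516:propb} simply says it is analogous to that of Proposition \ref{20130516:prop}, and your argument carries out that analogy faithfully, replacing the Jacobi identity and Leibniz rules by their $\lambda$-bracket counterparts, handling $\partial$ via sesquilinearity and the $\partial$-stability of $\langle\phi(I)\rangle_\mc V$, and using $I\subset S(R)$ to reduce brackets against $\phi(I)$ to brackets against $\phi(R)$.
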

\begin{proof}
It is analogue to the proof of Proposition \ref{20130516:prop}.
\end{proof}

%%%
\subsection{Construction of the classical $\mc W$-algebras}\label{sec:3.2}

In analogy with the construction, in Theorem \ref{20130620:thm},
of the classical finite $\mc W$-algebra $\mc W^{cl,fin}(\mf g,f)$
via Hamiltonian reduction,
we define its ``affine analogue'', the classical $\mc W$-algebra $\mc W^{cl}_z(\mf g,f)$,
via a Hamiltonian reduction of the affine Poisson vertex algebra $\mc V_z(\mf g)$
in Example \ref{20130620:ex3}.

Let, as in Section \ref{sec:2.3},
$\mf g$ be a reductive finite dimensional Lie algebra,
with a non-degenerate symmetric bilinear form $(\cdot\,|\,\cdot)$.
Let $(e,h=2x,f)$ be an $\mf{sl}_2$-triple in $\mf g$,
and consider the $\ad x$-eignespace decomposition
$\mf g=\bigoplus_{i\in\frac12\mb Z}\mf g_i$,
Recall the definitions of the bilinear form $\omega$ on $\mf g_{\frac12}$ in \eqref{20130620:eq1},
of the nilpotent subalgebra $\mf n\subset\mf g$ in \eqref{20130620:eq2},
and of the $\ad\mf n$-invariant subset $I=\{n-(f|n)\}_{n\in\mf n}\subset S(\mf n)$,
as in \eqref{20130620:eq3}.
Consider the current Lie conformal algebra $\mb F[\partial]\mf n$
with $\lambda$-bracket $\{a_\lambda b\}=[a,b]$ for $a,b\in\mf n$,
and extended to $R$ by sesquilinearity.
It is obviously a Lie conformal subalgebra of $\mc V_z(\mf g)$.
Hence, the inclusion $S(\mb F[\partial]\mf n)\subset\mc V_z(\mf g)$
is a homomorphism of Poisson vertex algebras.
Furthermore, the set $I=\{n-(f|n)\}_{n\in\mf n}\subset S(\mb F[\partial]\mf n)$
is invariant by the $\lambda$-adjoint action of $\mb F[\partial]\mf n$.
Let $\langle n-(f|n)\rangle_{n\in\mf n}\subset\mc V_z(\mf g)$
be the differential algebra ideal of $\mc V_z(\mf g)$ generated by $I=\{n-(f|n)\}_{n\in\mf n}$.
We can consider the Hamiltonian reduction 
associated to the data $(\mc V_z(\mf g,f),\mb F[\partial]\mf n,I)$.
\begin{definition}[\cite{DSKV13a}]\label{20130621:def}
The \emph{classical} $\mc W$-\emph{algebra} is
the following $1$-parameter family (parametrized by $z\in\mb F$)
of Poisson vertex algebras
$$
\begin{array}{l}
\displaystyle{
\vphantom{\Big(}
\mc W^{cl}_z(\mf g,f)
\simeq 
\mc W(\mc V_z(\mf g),\mb F[\partial]\mf n,I)
} \\
\displaystyle{
=\Big\{ P\in S(\mb F[\partial]\mf g)\,\Big|\,\{a_\lambda P\}_z\in \mb F[\lambda]\otimes\langle n-(f|n)\rangle_{n\in\mf n}
\,\forall a\in\mf n\Big\}
\Big/
\langle n-(f|n)\rangle_{n\in\mf n}
\,.}
\end{array}
$$
\end{definition}

It is convenient to give a description of the $\mc W$-algebra as a subspace,
rather than as a quotient space.
Let $\ell^\prime\subset\mf g_{\frac12}$ 
be a maximal isotropic subspace (w.r.t. $\omega$) 
complementary to $\ell$.
Hence, $\mf p=\ell^\prime\oplus\mf g_{\leq 0}\subset\mf g$ is a subspace complementary 
to $\mf n$:
$$
\mf g=\mf n\oplus\mf p=\mf n^\perp\oplus\mf p^\perp\,.
$$
(We can thus identify $\mf p^*=\mf n^\perp$ and $\mf n^*=\mf p^\perp$.)
Consider the differential algebra homomorphism
$\rho:\,\mc V_z(\mf g)\to S(\mb F[\partial]\mf p)$ given by
$$
\rho(a)=\pi_{\mf p}(a)+(f|a)
\,\,\,\,\forall a\in\mf g
\,.
$$
Clearly, $\ker(\rho)=\langle n-(f|n)\rangle_{n\in\mf n}$.
Then, the classical $\mc W$-algebra can be equivalently be defined as follows:
$$
\mc W^{cl}_z(\mf g,f)
\simeq
\Big\{P\in S(\mb F[\partial]\mf p)
\,\Big|\,
\rho(\{a_\lambda P\}_z)=0\,\,\forall a\in\mf n\Big\} 
\,,
$$
with $\lambda$-bracket
$$
\{P_\lambda Q\}_{z,\rho}=\rho\{P_\lambda Q\}_z
\,.
$$

We can find explicit formulas for the $\lambda$-brackets
of generators of $\mc W^{cl}(\mf g,f)$,
analogue to the result of Theorem \ref{20130521:prop}.
\begin{theorem}[\cite{DSK13c}]\label{20130521:propb}
As a differential algebra, the classical $\mc W$-algebra $\mc W^{cl}_z(\mf g,f)$
is isomorphic to the algebra of differential polynomials over $\mf g^f$,
i.e. $\mc W^{cl}_z(\mf g,f)\simeq S(\mb F[\partial]\mf g^f)$.
Moreover,
the $\lambda$ bracket on $\mc W^{fin}_z(\mf g,f)$ is,
using the notation in Theorem \ref{20130521:prop} ($q_{i_0},q_{j_0}\in\mf g^f$):
\begin{equation}\label{monster-eq}
\begin{array}{l}
\displaystyle{
\{p_\lambda q\}_z=
\sum_{s,t=0}^\infty
\sum_{i_1,\dots,i_s=1}^k
\sum_{j_1,\dots,j_t=1}^k
\sum_{m_1,\dots,m_s=0}^d
\sum_{n_1,\dots,n_t=0}^d
} \\
\displaystyle{
\vphantom{\Big(}
\big([q_{j_0},q^{j_1}_{n_1}]^\sharp
-\delta_{n_1,0}\delta_{j_1,j_0}(\lambda+\partial)\big)
\dots
\big([q_{j_{t-1}}^{n_{t-1}+1},q^{j_t}_{n_t}]^\sharp
-\delta_{n_t,n_{t-1}+1}\delta_{j_t,j_{t-1}}(\lambda+\partial)\big)
} \\
\displaystyle{
\vphantom{\Big(}
\big(
[q_{i_s}^{m_s+1},q_{j_t}^{n_t+1}]^\sharp
+(q_{i_s}^{m_s+1}|q_{j_t}^{n_t+1})(\lambda+\partial)
+z(s|[q_{i_s}^{m_s+1},q_{j_t}^{n_t+1}]
\big)
} \\
\displaystyle{
\vphantom{\Big(}
\big([q_{i_{s-1}}^{m_{s-1}+1},q^{i_s}_{m_s}]^\sharp
+\delta_{m_s,m_{s-1}+1}\delta_{i_s,i_{s-1}}(\lambda+\partial)\big)
\dots
\big([q_{i_0},q^{i_1}_{m_1}]^\sharp
+\delta_{m_1,0}\delta_{i_1,i_0}\lambda\big)
\,.}
\end{array}
\end{equation}
(In the RHS, when $s$ or $t$ is $0$,
we replace $m_0+1$ or $n_0+1$ by $0$.)
\end{theorem}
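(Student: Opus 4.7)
The plan is to work with the subspace realization
\[
\mc W^{cl}_z(\mf g,f) \simeq \Big\{P \in S(\mb F[\partial]\mf p) \,\Big|\, \rho(\{a_\lambda P\}_z) = 0 \ \forall a \in \mf n\Big\}
\]
given just before the theorem, with $\lambda$-bracket $\{P_\lambda Q\}_{z,\rho} = \rho\{P_\lambda Q\}_z$, and to mimic the strategy used for the finite case in Theorem \ref{20130521:prop}.

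First, for each $q \in \mf g^f$ I would produce a distinguished element $w(q) \in \mc W^{cl}_z(\mf g,f)$ satisfying $w(q) = q + (\text{higher terms})$ with respect to the Kazhdan grading on $S(\mb F[\partial]\mf g)$ (in which $\partial$ has degree $1$ and $a \in \mf g_j$ has degree $1-j$). The construction is iterative: starting from $q$, one computes $\rho(\{a_\lambda w\}_z)$ for $a \in \mf n$, identifies the obstruction to invariance at the next Kazhdan degree, and cancels it by adding a correction in $S(\mb F[\partial]\mf p)$ of that degree. Solvability at each step reduces to the non-degeneracy of the pairing between $\mf n$ and $\mf p^* = \mf n^\perp$ together with the $\mf{sl}_2$-decomposition $\mf g = \mf g^f \oplus [e,\mf g]$ that was already used in Theorem \ref{20130521:prop}. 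Sending $\partial^n q \mapsto \partial^n w(q)$ then produces the differential algebra isomorphism $S(\mb F[\partial]\mf g^f) \simeq \mc W^{cl}_z(\mf g,f)$, proving the first assertion.

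To derive \eqref{monster-eq}, I would expand $w(q)$ explicitly as a sum over ``chains'' $(i_1,m_1),\ldots,(i_s,m_s)$ with factors $q^{i_k}_{m_k}$ and nested brackets against $q$ in the appropriate $\mf g^f$-projection $[\cdot,\cdot]^\sharp$; the shape of such a chain is dictated by iterating the dual pairing $\langle q_i^m, q^j_n\rangle = \delta_{i,j}\delta_{m,n}$ at each correction step. The quantity $\{w(p)_\lambda w(q)\}_{z,\rho} = \rho\{w(p)_\lambda w(q)\}_z$ is then evaluated using \eqref{cur}: the three summands of that bracket produce respectively the outer and inner $[\cdot,\cdot]^\sharp$-factors of \eqref{monster-eq}, the middle factor $(q_{i_s}^{m_s+1}|q_{j_t}^{n_t+1})(\lambda+\partial)$, and the cocycle term $z(s|[q_{i_s}^{m_s+1},q_{j_t}^{n_t+1}])$. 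Applying $\rho$ kills everything in $\mf n$ except for the replacement $n \mapsto (f|n)$, which converts all nested brackets into their $\sharp$-projections, while the sesquilinearity shifts $\partial \mapsto \lambda+\partial$ interact with the internal chain pairings to yield the Kronecker $\delta$-corrections $\delta_{m_k,m_{k-1}+1}\delta_{i_k,i_{k-1}}(\lambda+\partial)$ that appear in the formula.

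The main obstacle will be the combinatorial bookkeeping required to match the iteratively constructed $w(q)$ with the chain expansion appearing in \eqref{monster-eq} and to verify that every $(\lambda+\partial)$-shift and every $\delta$-correction is placed with the correct sign. This is the exact affine counterpart of the finite case; once the proof of Theorem \ref{20130521:prop} is organized in a way that keeps track of the $\partial$-derivatives introduced by sesquilinearity and of the central and trivial cocycle terms in \eqref{cur}, the extension to the affine setting is mostly routine but computationally intricate, which is why the formula itself takes the imposing form \eqref{monster-eq}.
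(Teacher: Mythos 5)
There is nothing in this paper to compare your argument against: the survey states Theorem \ref{20130521:propb} without proof, quoting it from \cite{DSK13c} (listed as ``in preparation''), so your proposal can only be judged on its own terms. Its skeleton --- realize $\mc W^{cl}_z(\mf g,f)$ inside $S(\mb F[\partial]\mf p)$ via $\rho$, build for each $q\in\mf g^f$ a generator $w(q)=q+(\text{corrections})$, show these freely generate the differential algebra, then compute $\rho\{w(p)_\lambda w(q)\}_z$ --- is indeed the natural affine counterpart of Theorem \ref{20130521:prop} and points in the right direction.

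However, as written it is a plan rather than a proof, and two of its load-bearing claims are off. First, the induction cannot run on the Kazhdan (conformal weight) grading: the corrections to $w(q)$ have the \emph{same} conformal weight as $q$ (this is exactly why the generators can be chosen homogeneous), so ``higher terms with respect to the Kazhdan grading'' is not an admissible induction parameter; one must filter by something else (e.g.\ the $\ad x$-eigenvalue of the non-$\mf g^f$ factors, or polynomial degree) and prove a vanishing statement that makes each corrective step solvable --- you assert solvability but never set up the relevant complex. Second, the entire content of \eqref{monster-eq} lies in the computation you defer: you never derive the claimed chain expansion of $w(q)$, and your accounting of where the terms come from is not correct. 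The corrections $\delta_{m_k,m_{k-1}+1}\delta_{i_k,i_{k-1}}(\lambda+\partial)$ appear in the \emph{outer} factors of the first and last lines of \eqref{monster-eq}; they cannot come from the Lie-bracket summand of \eqref{cur}, since a $(\lambda+\partial)$ can only be produced by the bilinear-form term (pairing the dual basis elements, $(q_{i}^{m+1}|q^{i'}_{m'})=\delta_{i,i'}\delta_{m',m+1}$, shifted by sesquilinearity) or by $\partial$-derivatives genuinely present inside the corrections $w(q)-q$, neither of which your sketch tracks. Likewise, the double sum over $s$ and $t$ arises from cross terms between the corrections of $w(p)$ and of $w(q)$, and one must show that all other contributions cancel or recombine into the displayed shape; calling this ``mostly routine bookkeeping'' is precisely where the theorem remains unproved.
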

In the special case classical $\mc W$-algebras for principal and minimal
nilpotent element, 
equation \eqref{monster-eq}
was proved in \cite{DSKV13b}.

%%%
\subsection{Application of Poisson vertex algebras to the theory of Hamiltonian equations}
\label{sec:3.3}

Poisson vertex algebras can be used in the study 
of Hamiltonian partial differential equations in classical field theory,
and their integrability \cite{BDSK09}
(in the same way as Poisson algebras are used to study Hamiltonian equations 
in classical mechanics).

The basic observation is that, if $\mc V$ is a Poisson vertex algebra 
with $\lambda$-bracket $\{\cdot\,_\lambda\,\cdot\}$,
then $\mc V/\partial\mc V$ is a Lie algebra, with Lie bracket
$$
\{\tint f,\tint g\}=\tint \{f_\lambda g\}\big|_{\lambda=0}\,,
$$
and we have a representation of the Lie algebra $\mc V/\partial\mc V$ on $\mc V$, 
with the following action
$$
\{\tint f,g\}=\{f_\lambda g\}\big|_{\lambda=0}\,,
$$
We can then introduce Hamiltonian equations
and integrals of motion in the same way as in classical mechanics.

\begin{definition}
Given a Poisson vertex algebra $\mc V$ with $\lambda$-bracket $\{\cdot\,_\lambda\,\cdot\}$,
the \emph{Hamiltonian equation} with Hamiltonian functional $\tint h\in\mc V/\partial\mc V$
is:
\begin{equation}\label{hameq}
\frac{du}{dt}=\{h_\lambda u\}\big|_{\lambda=0}
\,.
\end{equation}
An \emph{integral of motion} for the Hamiltonian equation \eqref{hameq}
is an element $\tint g\in\mc V$ such that
$$
\{\tint h,\tint g\}=\tint \{h_\lambda g\}\big|_{\lambda=0}=0
\,.
$$
The element $g\in\mc V$ is then called a \emph{conserved density}.
The usual requirement to have \emph{integrability}
is to have an infinite sequence $\tint g_0=\tint h,\tint g_1,\tint g_2,\dots$
of linearly independent integrals of motion in involution:
$$
\tint \{{g_m}_\lambda {g_n}\}\big|_{\lambda=0}=0
\,\,\text{ for all } m,n\in\mb Z_+
\,.
$$
\end{definition}
\begin{example}\label{20130621:ex}
The famous \emph{KdV equation}, describing the evolution of waves in shallow water is
$$
\frac{\partial u}{\partial t}
=3u\frac{\partial u}{\partial x}+c\frac{\partial^3 u}{\partial x^3}
\,.
$$
It is a bi-Hamiltonian equation,
since it can be written in two compatible Hamiltonian forms:
$$
\frac{du}{dt}
=
\Big\{\frac12(u^3+cuu'')_\lambda u\Big\}_0\Big|_{\lambda=0}
=
\Big\{\frac12 u^2\,_\lambda u\Big\}_1\Big|_{\lambda=0}
\,,
$$
on the differential algebra $\mc V=S(\mb F[\partial]u)$,
with PVA $\lambda$-brackets
$$
\{u_\lambda u\}_0=\lambda
\,,\,\,
\{u_\lambda u\}_1=u'+2u\lambda+c\lambda^3
\,.
$$
Compatibility means that 
$\{\cdot\,_\lambda\,\cdot\}_z=\{\cdot\,_\lambda\,\cdot\}_0+\{\cdot\,_\lambda\,\cdot\}_1$
is a 1-parameter family of PVA $\lambda$-brackets.
\end{example}

The usual ``trick'' to construct a sequence 
$\tint g_n,\,n\in\mb Z_+$, of integrals of motion in involution
is the so called \emph{Lenard-Magri} scheme:
assuming we have a bi-Hamiltonian equation
$$
\frac{du}{dt}
=
\big\{{h_1}_\lambda u\big\}_0
=
\big\{{h_0}_\lambda u\big\}_1
\,,
$$
we try to solve the following recursion equation for $\tint g_n,\,n\geq0$
(starting with $g_0=h_0$ and $g_1=h_1$),
\begin{equation}\label{lm}
\big\{{g_0}_\lambda u\big\}_0=0
\,\,,\,\,\,\,
\big\{{g_{n+1}}_\lambda u\big\}_0
=
\big\{{g_n}_\lambda u\big\}_1
\,.
\end{equation}
(There are various ``cohomologcal'' arguments indicating that, often,
such recursive equations can be solved for every $n$, see e.g. \cite{DSK12a,DSK12b}.)
In this case, it was a simple observation of Magri \cite{Mag78}
that the solutions $\tint g_n,\,n\in\mb Z_+$,
are integrals of motion in involution
w.r.t. both PVA $\lambda$-brackets $\{\cdot\,_\lambda\,\cdot\}_0$ and $\{\cdot\,_\lambda\,\cdot\}_1$,
and therefore we get the integrable hierarchy of bi-Hamiltonian equations
$$
\frac{du}{dt_n}=\big\{{g_n}_\lambda u\big\}_0\,.
$$

%%%
\subsection{Generalized Drinfeld-Sokolov bi-Hamiltonian integrable hierarchies}
\label{sec:3.4}

Following the ideas of \cite{DS85},
we can prove that the Lenard-Magri scheme can be applied
to construct integrable hierarchies of bi-Hamiltonian equations
attached to the classical $\mc W$-algebras $\mc W^{cl,fin}_z(\mf g,f)$.
For example,
for $\mf g=\mf{sl}_2$, we get the KdV hierarchy (cf. Example \ref{20130621:ex}).

The basic assumption is that there exists
a homogeneous (w.r.t. the $\ad x$-eigenspace decomposition)
element $s\in\ker(\ad\mf n)$ such that $f+s$ is a semisimple element of $\mf g$.
Hence, $f+zs$ is a semisimple element of $\mf g((z^{-1}))$,
and we have the direct sum decomposition
$\mf g((z^{-1}))=\mf h\oplus\mf h^\perp$, where
\begin{equation}\label{20130621:eq1}
\mf h:=\Ker\ad(f+zs)
\,\,\text{ and }\,\,
\mf h^\perp:=\im\ad(f+zs)
\,.
\end{equation}
We define a $\frac12\mb Z$-grading of $\mf g((z^{-1}))$
by letting $\deg(z)=-d-1$, if $s\in\mf g_d$.
In particular, $f+zs$ is homogenous of degree $-1$.
We have the induced decompositions
of $\mf h$ and $\mf h^\perp$ (since $f+zs$ is homogeneous):
\begin{equation}\label{20130621:eq2}
\mf h=\widehat\bigoplus_{i\in\frac12\mb Z}\mf h_i
\,\,\text{ and }\,\,
\mf h^\perp=\widehat\bigoplus_{i\in\frac12\mb Z}\mf h^\perp_i
\,.
\end{equation}
Consider the Lie algebra 
$$
\tilde{\mf g}=
\mb F\partial\ltimes\big(\mf g((z^{-1}))\otimes\mc V(\mf p)\big)
\,,
$$
where $\partial$ acts only on the second factor of the tensor product.
Clearly,
$\mf g((z^{-1}))_{>0}\otimes\mc V(\mf p)\subset\tilde{\mf g}$
is a pro-nilpotent subalgebra.
Hence, for $U(z)\in \mf g((z^{-1}))_{>0}\otimes\mc V(\mf p)$,
we have a well defined automorphism
$e^{\ad U(z)}$ of $\tilde{\mf g}$.
Let $\{q_i\}_{i\in J}$ be a basis of $\mf p$,
and let $\{q^i\}_{i\in J}$ be the dual basis of $\mf n^\perp$.
\begin{theorem}[\cite{DSKV13a}]
\begin{enumerate}[(a)]
\item
There exist unique
$U(z)\in\mf h^\perp_{>0}\otimes\mc V(\mf p)$
and $h(z)\in\mf{h}_{>-1}\otimes\mc V(\mf p)$ such that
\begin{equation}\label{L0_dsr}
e^{\ad U(z)}\Big(\partial+(f+zs)\otimes1+\sum_{i\in J}q^i\otimes q_i\Big)
=\partial+(f+zs)\otimes1+h(z)\,.
\end{equation}
\item
For $0\neq a(z)\in Z(\mf h)$, 
the coefficients $g_n,\,n\in\mb Z_+$,
of the Laurent series 
\begin{equation}\label{gz}
g(z)=(a(z)\otimes1 | h(z))\,\in\mc V(\mf p)((z^{-1}))\,,
\end{equation}
lie in $\mc W^{cl,fin}_z(\mf g,f)\subset\mc V(\mf p)$ modulo $\partial\mc V(\mf p)$,
and they satisfy the Lenard-Magri recursion equations \eqref{lm}
for the PVA $\lambda$-brackets
$\{\cdot\,_\lambda\,\cdot\}_0=\{\cdot\,_\lambda\,\cdot\}_{z=0}$
and
$\{\cdot\,_\lambda\,\cdot\}_1=\frac{d}{dz}\{\cdot\,_\lambda\,\cdot\}_z\big|_{z=0}$
\end{enumerate}
\end{theorem}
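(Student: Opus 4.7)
\emph{Part (a).} I would proceed by induction on the $\frac12\mb Z$-grading with $\deg z=-d-1$. Under this grading $f+zs$ is homogeneous of degree $-1$, and the decompositions \eqref{20130621:eq2} are respected. The key algebraic fact is that $\ad(f+zs)$ restricts to a bijection $\mf h^\perp \to \mf h^\perp$: surjectivity holds by the very definition $\mf h^\perp=\im\ad(f+zs)$, and injectivity because $\ker\ad(f+zs)\cap\mf h^\perp=\mf h\cap\mf h^\perp=0$. Write $U(z)=\sum_{k\geq 1/2}U_k$ and $h(z)=\sum_{k\geq-1/2}h_k$ with $U_k\in\mf h^\perp_k\otimes\mc V(\mf p)$ and $h_k\in\mf h_k\otimes\mc V(\mf p)$. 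Expanding the Baker--Campbell--Hausdorff series in \eqref{L0_dsr} and collecting terms of degree $k$, one obtains an equation of the form
\[
-\ad(f+zs)(U_{k+1})+h_k=R_k,
\]
where $R_k$ depends only on $U_{1/2},\dots,U_k$ and $h_{-1/2},\dots,h_{k-1}$. Decomposing $R_k=R_k^{\mf h}+R_k^{\mf h^\perp}$ and using the bijectivity of $\ad(f+zs)$ on $\mf h^\perp$, one uniquely solves $h_k=R_k^{\mf h}$ and $U_{k+1}=-\ad(f+zs)^{-1}(R_k^{\mf h^\perp})$.

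\emph{Part (b), conservation step.} After the gauge transformation, both summands of the non-derivative part of $\mc L:=\partial+(f+zs)\otimes 1+h(z)$ lie in $\mf h\otimes\mc V(\mf p)$. For $a(z)\in Z(\mf h)$ this forces $a(z)\otimes 1$ to commute (in the adjoint action on $\tilde{\mf g}$) with $(f+zs)\otimes 1+h(z)$; pairing with the invariant form and using that $e^{\ad U(z)}$ preserves said form yields the conservation law. To show $g_n\in\mc W^{cl}_z(\mf g,f)\bmod\partial\mc V(\mf p)$, I would use the equivalent description $\mc W^{cl}_z(\mf g,f)=\{P\in S(\mb F[\partial]\mf p)\mid\rho\{a_\lambda P\}_z=0\;\forall a\in\mf n\}$ given right after Definition~\ref{20130621:def}. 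The condition is verified by applying $\{a_\lambda\cdot\}_z$ to \eqref{gz}, transporting it through the gauge transformation, and using that the infinitesimal $\mf n$-action corresponds precisely to the gauge freedom preserving the form of \eqref{L0_dsr}; the central element $a(z)$ annihilates the resulting expression modulo $\partial$.

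\emph{Part (b), Lenard--Magri step.} View $g(z)=\sum_n g_n z^{-n}$ as a generating function of integrals for the $z$-family $\{\cdot\,_\lambda\,\cdot\}_z$; by construction this family decomposes as $\{\cdot\,_\lambda\,\cdot\}_0+z\{\cdot\,_\lambda\,\cdot\}_1$. The single conservation law $\{g(z)_\lambda u\}_z|_{\lambda=0}\equiv 0$ then expands as $\{g_n{}_\lambda u\}_0+\{g_{n-1}{}_\lambda u\}_1\equiv 0$ (mod $\partial$) for each power $z^{-n}$, which after reindexing is exactly the Lenard--Magri recursion \eqref{lm}.

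\emph{Main obstacle.} Part (a) is essentially a formal consequence of the bijectivity of $\ad(f+zs)$ on $\mf h^\perp$ and is routine. The real work is in part (b): one must check that the gauge transformation, which a priori has coefficients in all of $\mc V(\mf p)$, produces densities $g_n$ whose $\mf n$-variations genuinely reduce to zero modulo total derivatives, and that the interplay between the gauge freedom and the splitting $\{\cdot\,_\lambda\,\cdot\}_z=\{\cdot\,_\lambda\,\cdot\}_0+z\{\cdot\,_\lambda\,\cdot\}_1$ produces precisely the Lenard--Magri form. Tracking the $z$-dependence through the inductive construction in (a) and through $\rho$ is the delicate bookkeeping that dominates the proof.
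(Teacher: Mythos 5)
This review states the theorem as a quotation from \cite{DSKV13a} and gives no proof of it, so the only meaningful comparison is with the proof there. Your part (a) is exactly that argument: graded induction on the $\frac12\mb Z$-grading, using that $\ad(f+zs)$ is bijective on $\mf h^\perp$ (since $\mf g((z^{-1}))=\mf h\oplus\mf h^\perp$), solving degree by degree for $h_k$ and $U_{k+1}$; this part is correct and routine, as you say.

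In part (b), however, there is a genuine gap: everything you need rests on one key lemma of \cite{DSKV13a} that you never formulate, namely the explicit computation of the variational derivative of $\tint g(z)$ as (the appropriate projection of) the ``resolvent'' $e^{-\ad U(z)}(a(z)\otimes 1)$, which commutes with $\partial+(f+zs)\otimes 1+\sum_i q^i\otimes q_i$. Your ``conservation step'' and your ``single conservation law'' $\{g(z)_\lambda u\}_z\big|_{\lambda=0}\equiv 0$ are exactly the statements that follow from this lemma; as written you assert them (``pairing with the invariant form \dots yields the conservation law'', ``the central element $a(z)$ annihilates the resulting expression modulo $\partial$'') rather than derive them, and deriving them is the real content of the theorem. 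Moreover, invariance of the local functional $\tint g_n$ under the $\mf n$-action, i.e. $\rho\{a_\lambda g_n\}_z\big|_{\lambda=0}\in\partial\mc V(\mf p)$, is a priori weaker than the defining condition $\rho\{a_\lambda P\}_z=0$ of $\mc W^{cl}_z(\mf g,f)$, so ``gauge invariance of the density mod $\partial$'' alone does not yet produce a representative of $g_n$ in the $\mc W$-algebra; the explicit form of $\delta\tint g(z)/\delta u$ is what closes this step in \cite{DSKV13a}. Finally, a bookkeeping error in the Lenard--Magri step: with $g(z)=\sum_n g_nz^{-n}$ and $\{\cdot\,_\lambda\,\cdot\}_z=\{\cdot\,_\lambda\,\cdot\}_0+z\{\cdot\,_\lambda\,\cdot\}_1$, the coefficient of $z^{-n}$ in $\{g(z)_\lambda u\}_z$ is $\{{g_n}_\lambda u\}_0+\{{g_{n+1}}_\lambda u\}_1$, not $\{{g_n}_\lambda u\}_0+\{{g_{n-1}}_\lambda u\}_1$; the recursion your identity yields is therefore $\{{g_n}_\lambda u\}_{0}=-\{{g_{n+1}}_\lambda u\}_{1}$ with the labels of the theorem, i.e. \eqref{lm} with the roles of the two brackets interchanged (up to sign). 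That interchanged form is in fact the one consistent with Example \ref{20130621:ex} (where $\{\cdot\,_\lambda\,\cdot\}_0$ is the bracket whose Casimir $\tint u$ starts the KdV hierarchy) and with \cite{DSKV13a}, which indicates that the subscripts in the statement as printed here are swapped; your ``after reindexing'' glosses over precisely this point and should be made explicit rather than left implicit.
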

Hence, 
we get an integrable hierarchy of bi-Hamiltonian equations,
called the \emph{generalized Drinfeld-Sokolov hierarchy} ($w\in\mc W$),
$$
\frac{dw}{dt_n}=\rho\{{g_n}_\lambda w\}_0\big|_{\lambda=0}
\,\,,\,\,\,\,n\in\mb Z_+\,.
$$

%%%%%%%%%%% BIBLIOGRAFIA %%%%%%%%%%%%%%%%%%%%%%%%

\end{document}